\newcommand{\E}{\mathbb{E}}
\newcommand{\halmos}{\mbox{} \hfill $\Box$}
\newcommand{\be}{\begin{equation}}
\newcommand{\ee}{\end{equation}}
\newtheorem{proposition}{Proposition}
\newtheorem{corollary}{Corollary}
\newtheorem{lemma}{Lemma}
\newtheorem{theorem}{Theorem}
\newenvironment{proof}{\paragraph*{\it Proof.}}{\halmos}
\begin{document}

\author{
Sylvain Delattre \\ LPMA, Universit\'e Paris Diderot (Paris 7)\\ delattre@math.univ-paris-diderot.fr\\
$~$\\
Christian Y. Robert \\ SAF, Universit\'e Lyon 1 \\ christian.robert@univ-lyon1.fr\\
$~$\\
Mathieu Rosenbaum\\ LPMA, Universit\'e Pierre et Marie Curie (Paris 6)\\ mathieu.rosenbaum@upmc.fr 
 }
 \title{Estimating the efficient price from the order flow:\\
  a Brownian Cox process approach}

\maketitle
\begin{abstract}
At the ultra high frequency level, the notion of price of an asset is very ambiguous.
Indeed, many different prices can be defined (last traded price, best bid price, mid price,\ldots).
Thus, in practice, market participants face the problem of choosing a price when implementing their strategies. 
In this work, we propose a notion of efficient price which seems relevant in practice. Furthermore, we provide
a statistical methodology enabling to estimate this price form the order flow.
\end{abstract}

\noindent\textbf{Key words:} Efficient price, order flow, response function, market microstructure, Cox processes, fractional part of Brownian motion, non parametric estimation, functional limit theorems.


\section{Introduction}

\subsection{What is the high frequency price ?}

The classical approach of mathematical finance is to consider that the prices of basic products (future, stock,\ldots)
are observed on the market. In particular, their values are used in order to price complex derivatives. Since options traders typically rebalance their portfolio once or a few times a day, such derivatives pricing problems typically occur at the daily scale. 
 
When working at the ultra high frequency scale, even pricing a basic product, that is assigning a price to it, becomes a challenging issue. Indeed, one has access to trades and quotes in the order book so that at a given time, many different notions of price can be defined for the same asset: last traded price, best bid price, best ask price, mid price, volume weighted average price,\ldots
This multiplicity of prices is problematic for many market participants. For example, market making strategies or brokers optimal execution algorithms often require single prices of plain assets as inputs. 

Choosing one definition or another for the price can sometimes lead to very significantly different outcomes for the strategies. This is for example the case when the tick value (the minimum price increment allowed on the market) is rather large. Indeed, this implies that the 
prices mentioned above differ in a non negligible way. 

In practice, high frequency market participants are not looking for the ``fair" economic value of the asset. What they need is rather a price whose value at some given time summarizes in a suitable way the opinions of market participants at this time. This price is called {\it efficient price}. Hence, this paper aims at providing a statistical procedure in order to estimate this efficient price.  

\subsection{Ideas for the estimation strategy}

In this paper, we focus on the case of large tick assets. We define them as assets for which the bid-ask spread is almost always 
equal to one tick. Our goal is then to infer an efficient price for this type of asset. Naturally, it is reasonable to assume that the efficient price essentially lies inside the bid-ask spread but we wish to say more. 

In order to retrieve the efficient price, the classical approach is to consider the imbalance of the order book, that is the difference between the available volumes at the best bid and best ask levels, see for example \cite{CL12}. Indeed, it is often said by market participants that ``the price is where the volume is not". Here we consider a dynamic version of this idea through the information available in the {\it order flow}. More precisely, we assume that the intensity of arrival of the limit order flow at the best bid or the best ask level depends on the distance between the efficient price and the considered level: if this distance is large, the intensity should be high and conversely. Thus, we assume the intensity can be written as an increasing deterministic function of this distance. This function is called the {\it order flow response function}. In our approach, a crucial step is to estimate the response function in a non parametric way. Then, this functional estimator is used in order to retrieve the efficient price.  

Note that it is also possible to use the buy or sell market order flow. In that case, the intensity of the flow should be high when the distance is small. Indeed, in this situation, market takers are not loosing too much money (with respect to the efficient price) when crossing the spread. 

\subsection{Organization of the paper}

The paper is organized as follows. The model and the assumptions are described in Section \ref{mod}. Particular properties of the
efficient price are given in Section \ref{effpr} and the main statistical procedure is explained in Section \ref{stat}. The theorems about the response function can be found in Section \ref{thh} and the
limiting behavior of the estimator of the efficient price is given in Section \ref{theffpr}. One numerical illustration can be found in Section \ref{num} and a conclusion is given in Section \ref{conclu}. Finally the proofs are relegated to Section \ref{proofs}.

\section{The model}\label{mod}

\subsection{Description of the model}

We assume the tick size is equal to one, meaning that the asset can only take integer values. Moreover, the efficient price is given by $$P_{t}=P_{0}+\sigma W_{t},~t\in[0,T],$$
where $( W_{t}) _{t\geq 0}$ is a Brownian motion on some filtered probability
space $(\Omega ,\mathcal{F},(\mathcal{F}_{t})_{t\geq 0},\mathbb{P})$ and $P_{0}$ is a $\mathcal{F}_{0}$-measurable random variable, independent of $(
W_{t}) _{t\geq 0}$
and uniformly distributed on $[p_{0},p_{0}+1)$, with $p_{0}\in \mathbb{N}$. Note that such a simple dynamics for the efficient price is probably still reasonable at our high frequency scale. 

Let $Y_{t}$ be the fractional part of $P_t$, denoted by $\{P_t\}$, that is: 
$$Y_t=\{P_t\}=P_{t}-\lfloor P_{t}\rfloor.$$ To fix ideas and without loss of generality, we focus in the rest of the paper on the limit order flow at the best bid level. We assume that when a limit order is posted at time $t$ at the best bid level, its price $B_t$ is given by $\lfloor P_{t}\rfloor$. Therefore at time $t$, the efficient price is
$$P_t=B_t+Y_t.$$

We denote by $N_{t}$ the total number of limit orders posted over $[0,t]$. In order to translate the fact that the intensity of $N_t$ should be an increasing function of $Y_t$,
we assume that $( N_{t}) _{t\geq 0}$ is a
Cox process (also known as doubly stochastic Poisson process) with arrival intensity at time $t$ given
by $$\mu h( Y_{t}),$$ where $\mu $ is a positive constant and where the response function $h:[0,1)\rightarrow \mathbb{R}_{+}$ satisfies the following assumption:
\paragraph{Assumption H1: Response function}
The function $h$ is $\mathcal{C}^1$ on $[0,1)$, with derivative $h'$ for which there exist some positive constants 
$c_1$ and $c_2$ such that for any $x$ in $[0,1)$,
$$c_1\leq h'(x)\leq c_2.$$ Moreover, we have the following identifiability condition:
\begin{equation*}
\int_{0}^{1}h(x)dx=1.
\end{equation*}

Remark that H1 implies in particular that $h$ is bounded on $[0,1)$. The response function expresses
the influence of the efficient price on the order flow. In particular, the limiting case where $h$ is constant corresponds
to orders arriving according to a standard Poisson process. This simple arrival mechanism is often assumed in practice for
technical convenience, see for example \cite{CL12}, \cite{far}. However, it is not very realistic. Nevertheless, it is clear
that our model is still probably too simple, $Y_t$ being the only source of fluctuation of the intensity. For example, a more reasonable version should probably allows for seasonalities in the flow and the possible influence of exogenous variables.

We assume that we observe the point process $(N_{t}) $ on $[0,T]$. To get asymptotic
properties, we let $T$ tend to infinity. It will be also necessary to
assume that $\mu =\mu _{T}$ depends on $T$. More precisely, we have the following assumption: 

\paragraph{Assumption H2: Asymptotic setting}
For some $\varepsilon>0$, as $T$ tends to infinity, $$T^{5/2+\varepsilon}/\mu_T\rightarrow 0.$$

Note that up to scaling modifications, we could also consider a setting where $T$ is fixed and the tick size is not constant equal to $1$ but tends to zero.

\section{Properties of the process $Y_t$}\label{effpr}

The intensity of the order flow is $\mu_T h(Y_t)$. This process $(Y_t)_{t\geq 0}$ has several nice properties. 
\subsection{Markov property}
First, 
recall that if $U$ is uniformly distributed on $[0,1]$ and $X$ is 
a real-valued random variable, which is independent of $U$ then $\{U + X\}$ is also uniformly distributed on $[0,1]$.
Thus, since $P_0$ is uniformly distributed on $[0,1]$, we obtain that $(Y_t)$ is a stationary Markov process. Then, from the properties
of the sample paths of the Brownian motion, it is clear that $(Y_t)$ is Harris recurrent and therefore, for $f$ a bounded positive measurable function, almost surely,
$$\underset{T\rightarrow +\infty}{\text{lim}}\frac{1}{T}\int_0^Tf(Y_s)ds=\int_0^1f(s)ds,$$ 
see for example Theorem 3.12 in \cite{ry}.
\subsection{Regenerative property}\label{reg}
Beyond the Markov property, the process $(Y_t)$ also enjoys a regenerative property.
We define the sequence of stopping time $(\nu _{n}) _{n\in 
\mathbb{N}}$ the following way: $\nu _{0}=0$, $\nu _{1}=\inf\left\{
t>0:P_{t}\in \mathbb{N}\right\} $ and for $n\geq 2$:
\begin{equation}\label{stop}
\nu _{n}=\inf \{ t>\nu _{n-1}:P_{t}=P_{\nu _{n-1}}\pm
1\} =\inf\{ t>\nu _{n-1}:W_{t}=W_{\nu _{n-1}}\pm 1/\sigma\}
.
\end{equation}%
The cycles $(Y_{t+\nu _{n}})_{0\leq t<\nu _{n+1}-\nu_{n}}$ are independent and identically distributed for $n\geq 1$. Thus $%
(Y_{t})$ is a regenerative process. Note that $\nu_2-\nu_1$ has the same law
as $$\tau _{1}=\inf\{ t\geq 0,|W_{t}|=1/\sigma\}.$$ The Laplace transform of $\tau_1$ is given by
\begin{equation*}
\mathbb{E}[e^{-\gamma\tau _{1}}] =1/\cosh\big((\sqrt{2\gamma})/\sigma\big),~~\gamma\geq 0.
\end{equation*}%
From this expression, we get $\mathbb{E}[\tau_1]=1/\sigma^2$ and $\mathbb{E}[\tau_1^2]=5/(3\sigma^4)$. Thus for $f$ a bounded positive measurable function,
by Theorem VI.3.1\ in \cite{As03}, we get that almost surely
$$\underset{T\rightarrow +\infty}{\text{lim}}\frac{1}{T}\int_0^Tf(Y_s)ds=\frac{1}{\mathbb{E}[\nu _{2}-\nu _{1}]}\mathbb{E}\big[\int_{\nu
_{1}}^{\nu _{2}}f(Y_{t})dt\big] =\sigma^2\mathbb{E}\big(\int_{0}^{\tau
_{1}}f(\{\sigma W_{t}\}) dt\big).$$ 
In particular, this implies that
$$\sigma^2\mathbb{E}\big[\int_{0}^{\tau_{1}}f(\{\sigma W_{t}\}) dt\big]=\int_0^1f(s)ds.$$
Furthermore, 
by Theorem VI.3.2, we get the following convergence in distribution as $T\rightarrow \infty$:
\begin{equation}\label{CLT1}
\sqrt{T}\Big(\frac{1}{T}\int_{0}^{T}f(Y_{t})dt-\sigma^2\mathbb{E}%
\big[\int_{0}^{\tau _{1}}f(\{\sigma W_{t}\}) dt\big]
\Big)\overset{d}{\rightarrow }N( 0,\sigma^2\text{Var}[Z^f]),
\end{equation}
with $Z^f$ a centered random variable defined by
$$Z^f=\int_{0}^{\tau_1}f(\{\sigma W_t\})dt-\tau_1\sigma^2\mathbb{E}%
\big[\int_{0}^{\tau _{1}}f(\{\sigma W_{t}\}) dt\big].$$

We end this section with two particular cases for the function $f$ which will be useful in the following.
First, if $f=h$, using the identifiability condition in Assumption H1, we get that almost surely:
\begin{equation*}
\underset{T\rightarrow +\infty}{\text{lim}}\frac{1}{T}\int_{0}^{T}h(Y_{t})dt=\sigma^2\mathbb{E}\big[\int_{0}^{\tau_{1}}h(\{\sigma W_{t}\}) dt\big]=1.
\end{equation*}
Therefore,
\begin{eqnarray*}
Z^h&=&\int_{0}^{\tau _{1}}\big(\mathbb{I}_{\{W_{t}<0\} }(
h( 1+\sigma W_{t}) -1) +\mathbb{I}_{\{ W_{t}>0\}
}(h( \sigma W_{t}) -1)\big) dt \\
&=&\int_{-1/\sigma}^{1/\sigma}\big(\mathbb{I}_{\{ u<0\} }( h(
1+\sigma u) -1) +\mathbb{I}_{\{ u>0\} }(h(\sigma u) -1)%
)L_{-1/\sigma,1/\sigma}( u) du,
\end{eqnarray*}%
where $L_{-1/\sigma,1/\sigma}$ is the local time of $( W_{t}) _{t\geq
0}$ stopped at $\tau _{1}$. 

In the same way, for $f(x)=\mathbb{I}_{\{h(x)\leq t\}}$, 
$t\in[0,h(1^{-}))$, we have 
$$\int_{0}^1\mathbb{I}_{\{h(s)\leq t\}}ds=h^{-1}(t).$$
So we define $Z_e(t)=Z^f$ by
\begin{eqnarray*}
Z_e(t)&=&\int_{0}^{\tau _{1}}\big( \mathbb{I}_{\{
W_{s}<0,h(1+\sigma W_{s})\leq t \} }+\mathbb{I}_{\{
W_{s}>0,h(\sigma W_{s})\leq t\} }-h^{-1}(t) \big) ds
\\
&=&\int_{-1/\sigma}^{1/\sigma}\big(\mathbb{I}_{\{ u<0,h(1+\sigma u)\leq t\} }+%
\mathbb{I}_{\{ u>0,h(\sigma u)\leq t\} }-h^{-1}( t) \big)L_{-1/\sigma,1/\sigma}( u) du.
\end{eqnarray*}%
In fact, it will be useful to see $(Z_e(t))_{t\in[0,h(1^{-}))}$ as a process indexed by $t$. Thus, we introduce the covariance function 
\begin{equation}
\rho( t_{1},t_{2}) =\text{Cov}[Z_e(t
_{1}),Z_e(t _{2})] =\mathbb{E}[Z_e(t
_{1})Z_e(t _{2})].  \label{defrho}
\end{equation}%
Note that this function and $\text{Var}[Z^h]$ can be computed as multiple integrals on $[-1/\sigma,1/\sigma]^{2}$. Indeed, for $(u,v)\in[-1/\sigma,1/\sigma]^{2}$ an explicit 
expression for the bivariate Laplace transform of $$\big(L_{-1/\sigma,1/\sigma}(u),L_{-1/\sigma,1/\sigma}(v)\big)$$ is available, see Formula (I.3.18.1) in \cite{BS}. Using differentiation, from this expression, one can easily derive\footnote{We skip the explicit writing of $\mathbb{E}[L_{-1/\sigma,1/\sigma}(u)L_{-1/\sigma,1/\sigma}(v)]$, the expression being quite heavy.} $\mathbb{E}[L_{-1/\sigma,1/\sigma}(u)L_{-1/\sigma,1/\sigma}(v)]$ which enables to deduce $\rho( t_{1},t_{2})$ and $\text{Var}[Z^h]$. 

\section{Estimating the flow response function and its inverse}\label{stat}

We want to estimate the flow response function. 
Before estimating $h$, we need to estimate $\mu _{T}$. Since%
\begin{equation*}
\mathbb{E}\big[\frac{N_{T}}{\mu _{T}T}\big]=\mathbb{E}\big[\frac{1}{T}%
\int_{0}^{T}h( Y_{t}) dt\big] =\frac{1}{T}\int_{0}^{T}\mathbb{E}%
[h( Y_{t})]dt=1,
\end{equation*}%
it is natural to propose 
\begin{equation}\label{estimmu}
\hat{\mu}_{T}=\frac{N_{T}}{T}
\end{equation}%
as an estimator for $\mu _{T}$. We have the following proposition, whose proof is given in Section \ref{Proofmuchap}.

\begin{proposition}
\label{muchap}If $\mu _{T}\rightarrow \infty $ as $T\rightarrow \infty $, we
have%
\begin{equation*}
\sqrt{T}\big( \frac{\hat{\mu}_{T}}{\mu _{T}}-1\big) \overset{d}{
\rightarrow }N(0,\sigma^2\emph{Var}[Z^h]) .
\end{equation*}
\end{proposition}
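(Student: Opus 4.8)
The plan is to split the normalized error into a ``Poisson noise'' term, coming from the conditional Poisson structure of the Cox process, and an ``occupation time'' term, to which the functional central limit theorem \eqref{CLT1} applies. Write $\Lambda_T=\mu_T\int_0^T h(Y_t)\,dt$ for the compensator of $N_T$, so that conditionally on the path $(Y_t)_{t\in[0,T]}$ the variable $N_T$ is Poisson with parameter $\Lambda_T$. Since $\hat\mu_T/\mu_T-1=N_T/(\mu_T T)-1$, one has
\[
\sqrt{T}\Big(\frac{\hat\mu_T}{\mu_T}-1\Big)=\frac{N_T-\Lambda_T}{\mu_T\sqrt{T}}+\sqrt{T}\Big(\frac1T\int_0^T h(Y_t)\,dt-1\Big).
\]

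First I would treat the second term. By the identifiability condition in Assumption H1 we have $\sigma^2\mathbb{E}\big[\int_0^{\tau_1}h(\{\sigma W_t\})\,dt\big]=1$ (this is recorded just after \eqref{CLT1}), so the second term is exactly the quantity appearing in \eqref{CLT1} with $f=h$. Hence it converges in distribution to $N(0,\sigma^2\text{Var}[Z^h])$.

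Next I would show the first term is asymptotically negligible. Conditionally on $(Y_t)_{t\in[0,T]}$, the variable $N_T-\Lambda_T$ is centered with variance $\Lambda_T$; by the law of total variance and the fact that $\mathbb{E}[h(Y_t)]=\int_0^1 h(x)\,dx=1$ (so $\mathbb{E}[\Lambda_T]=\mu_T T<\infty$, using that $h$ is bounded under H1),
\[
\mathbb{E}\Big[\Big(\frac{N_T-\Lambda_T}{\mu_T\sqrt{T}}\Big)^2\Big]=\frac{\mathbb{E}[\Lambda_T]}{\mu_T^2 T}=\frac{1}{\mu_T}\longrightarrow 0,
\]
since $\mu_T\to\infty$. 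Thus the first term tends to $0$ in $L^2$, hence in probability, and Slutsky's lemma yields the announced convergence.

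I do not expect a genuine obstacle here: the only point needing a little care is the conditional (Poisson) variance computation for the Cox process together with the finiteness of $\mathbb{E}[\Lambda_T]$, both of which are immediate from the model and the boundedness of $h$. Note in particular that the full strength of Assumption H2 is not used for this proposition, only $\mu_T\to\infty$.
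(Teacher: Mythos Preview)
Your proof is correct. The paper's own proof reaches the same conclusion via characteristic functions: it writes the characteristic function of $\sqrt{T}(\hat\mu_T/\mu_T-1)$, conditions on the path of $(Y_t)$ so that $N_T$ is Poisson, expands $\exp\big(it(\mu_T\sqrt T)^{-1}\big)-1$ to first order, and then invokes \eqref{CLT1} with $f=h$ to identify the Gaussian limit. Your argument makes the same two ingredients explicit but in a decomposition-plus-Slutsky form: the term $(N_T-\Lambda_T)/(\mu_T\sqrt T)$ is exactly the ``Poisson noise'' that the paper kills through the Taylor remainder $|e^{it(\mu_T\sqrt T)^{-1}}-1-it(\mu_T\sqrt T)^{-1}|\le t^2(\mu_T\sqrt T)^{-2}$, and your second term is the occupation-time CLT that both proofs use identically. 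The advantage of your route is that the $L^2$ bound $1/\mu_T$ makes the negligibility of the Cox noise completely transparent and avoids any analytic manipulation of characteristic functions; the paper's route has the (minor) advantage of handling both pieces in a single limit computation without an explicit Slutsky step. Conceptually the two proofs are the same; only the packaging differs.
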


We now explain how the estimator of the flow response function is
built. Let $k_{T}$ be a known deterministic sequence of positive integers. Then define for $%
j=1,\ldots ,k_{T}$%
\begin{equation*}
\hat{\theta}_{j}=k_{T}\frac{N_{jT/k_{T}}-N_{(j-1)T/k_{T}}}{\hat{\mu}_{T}T}=%
\frac{k_{T}}{N_{T}}( N_{jT/k_{T}}-N_{(j-1)T/k_{T}}).
\end{equation*}%
Now remark that $\hat{\theta}_{j}$ is approximately equal to
$$\frac{1}{\mu _{T}T/k_{T}}\sum_{i=1}^{\lfloor \mu _{T}T/k_{T}\rfloor}\big(N_{(j-1)T/k_{T}+i/\mu_T}-N_{(j-1)T/k_{T}+(i-1)/\mu_T}\big).$$ 
Conditional on the path of $(Y_t)$, the variables in the sum are independent
and if $T/k_{T}$ is small enough, they approximately follow 
a Poisson law with parameter $\mu _{T}h(Y_{(j-1)/k_{T}})/k_T$. Therefore, if moreover $\mu _{T}T/k_{T}$ is sufficiently large, one can expect that $%
\hat{\theta}_{j}$ is close to $h(Y_{(j-1)T/k_{T}})$. In fact, throughout the paper, we assume that $k_T$ is chosen so that for some $p>0$, as $T$ tends to infinity, $$\frac{T^{p+1/2}}{k_T^{p/2}}\rightarrow 0,~~\frac{k_TT^{1/2}}{\mu_T}\rightarrow 0.$$
Note that since $T^{5/2+\varepsilon}/\mu_T\rightarrow 0$ such a sequence $k_T$ exists.

The $\hat{\theta}_{j}$ introduced above are
$k_{T}$ estimators of quantities of the form $h(u_{j})$, $u_{j}\in \lbrack
0,1]$. However, we do not have access to the values of the $u_{i}$. Nevertheless, we know that
they are uniformly distributed on $[0,1]$. We therefore rank
the $\hat{\theta}_{j}$: $\hat{\theta}_{\left( 1\right) }\leq \hat{\theta}%
_{\left( 2\right) }\leq \ldots \leq \hat{\theta}_{\left( k_{T}\right) }$. For $u\in \lbrack 0,1)$, we
define the estimator of $h(u)$  the following way:%
\begin{equation*}
\hat{h}(u)=\hat{\theta}_{(\lfloor uk_{T}\rfloor+1)
}.
\end{equation*}%
Note that the identifiability condition holds since%
\begin{equation*}
\int_{0}^{1}\hat{h}(u)du=\frac{1}{k_{T}}\sum_{j=1}^{k_{T}}\hat{\theta}_{j}=1.
\end{equation*}

Then, the estimator of $h^{-1}$ is naturally defined by:
\begin{equation}\label{hinv}
\hat{h}^{-1}(t)=\frac{1}{k_{T}}\sum_{j=1}^{k_{T}}\mathbb{I}_{\{ 
\hat{\theta}_{j}\leq t\}}.
\end{equation}
Indeed, $\hat{h}$ is the right continuous generalized inverse of $\hat{h}^{-1}$.
\section{Limit theorems for the response function}\label{thh}

We now give the theorems associated to $\hat{h}^{-1}$ and $\hat{h}$.
For $b>0$, we write $D[0,b)$ (resp. $D[0,b]$), for the space of c\`{a}dl\`{a}g functions 
from $[0,b)$ (resp. $[0,b]$) into $\mathbb{R}$. We define
the convergence in law in $D[0,b)$ as the convergence in law of the restrictions of the stochastic processes to any compact $
[0,a] $, $0<a<b$, for the Skorohod topology $J_{1}$. We have the following result for $\hat{h}^{-1}$:

\begin{theorem}\label{thhinv}
Under H1 and H2, as $T$ tends to infinity, we have
\begin{equation*}
\sqrt{T}\big( \hat{h}^{-1}( \cdot) -h^{-1}(\cdot)
\big)\overset{d}{
\rightarrow } \sigma G(\cdot )-\frac{(\cdot)}{h^{\prime }\big( h^{-1}(\cdot )\big) }%
\int_{0}^{h(1^{-})}\sigma G(v) dv,
\end{equation*}%
in $D[0,h(1^{-}))$, where $G(\cdot )$ is a continuous centered Gaussian process with covariance
function $\rho $ defined by \eqref{defrho}.
\end{theorem}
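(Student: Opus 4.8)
The plan is to decompose the error $\hat h^{-1}(t)-h^{-1}(t)$ into a leading empirical-process term built from the $\hat\theta_j$'s and negligible remainders, then identify the limit of the leading term via the regenerative CLT \eqref{CLT1} applied to the family $f_t(x)=\mathbb{I}_{\{h(x)\le t\}}$, and finally handle the tightness/functional part. First I would write
$$\hat h^{-1}(t)=\frac{1}{k_T}\sum_{j=1}^{k_T}\mathbb{I}_{\{\hat\theta_j\le t\}},\qquad h^{-1}(t)=\int_0^1\mathbb{I}_{\{h(s)\le t\}}\,ds=\frac{1}{k_T}\sum_{j=1}^{k_T}\mathbb{I}_{\{h(u_j)\le t\}}+O(1/k_T),$$
where $u_j=(j-1)T/k_T$ (mod $1$) — using that these points equidistribute, as exploited already in the construction of $\hat h$. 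The key reduction is to replace $\hat\theta_j$ by $\bar\theta_j:=k_T\,T^{-1}\int_{(j-1)T/k_T}^{jT/k_T}h(Y_s)\,ds$ and then by $h(Y_{(j-1)T/k_T})$. For the first replacement one controls $\hat\theta_j-\bar\theta_j$ using the conditional Poisson structure: given the path of $(Y_t)$, $N_{jT/k_T}-N_{(j-1)T/k_T}$ is Poisson with mean $\mu_T\int h(Y_s)ds$ over the block, so the centered fluctuation is of order $(\mu_T T/k_T)^{-1/2}$, and the normalization by $\hat\mu_T$ (which is $\mu_T(1+O_P(T^{-1/2}))$ by Proposition \ref{muchap}) is benign under H2. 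The second replacement uses H1 (Lipschitz $h$) together with the modulus of continuity of $Y$ over intervals of length $T/k_T$. The conditions $T^{p+1/2}/k_T^{p/2}\to0$ and $k_T T^{1/2}/\mu_T\to0$ are exactly what make all these errors $o_P(T^{-1/2})$ uniformly in $t$ on compacts — and I would expect the first of these to be the lever that controls the discretization bias $\mathbb{I}_{\{\hat\theta_j\le t\}}-\mathbb{I}_{\{h(Y_{(j-1)T/k_T})\le t\}}$ on the set where $h(Y_{(j-1)T/k_T})$ is not too close to the level $t$, with the density bound $h'\ge c_1$ ensuring the ``close to $t$'' set has small measure.

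Once the reduction is done, the leading term is
$$\frac{1}{k_T}\sum_{j=1}^{k_T}\Big(\mathbb{I}_{\{h(Y_{(j-1)T/k_T})\le t\}}-h^{-1}(t)\Big)\approx \frac{1}{T}\int_0^T\Big(\mathbb{I}_{\{h(Y_s)\le t\}}-h^{-1}(t)\Big)\,ds,$$
the Riemann-sum error again being $o_P(T^{-1/2})$ by the chosen rates. Multiplying by $\sqrt T$, the finite-dimensional convergence follows from \eqref{CLT1} with $f=f_t$: since $\sigma^2\mathbb{E}[\int_0^{\tau_1}\mathbb{I}_{\{h(\{\sigma W_t\})\le t\}}dt]=h^{-1}(t)$, the limit is the centered Gaussian vector with entries $\sigma Z_e(t)$ and covariance $\sigma^2\rho$. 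However, \eqref{CLT1} as stated is \emph{not} quite enough: the true leading term involves $\hat\mu_T=N_T/T$ in the denominator of every $\hat\theta_j$, which couples all coordinates. Keeping track of this, $\hat h^{-1}(t)$ equals (to leading order) the empirical average of $\mathbb{I}_{\{h(Y_s)\le t\}}$ computed against the \emph{normalized} flow, and a one-term Taylor expansion in $\hat\mu_T/\mu_T$ produces the extra correction: writing $\hat\theta_j=(\mu_T/\hat\mu_T)\tilde\theta_j$ with $\tilde\theta_j$ the version normalized by $\mu_T$, the event $\{\hat\theta_j\le t\}$ becomes $\{\tilde\theta_j\le t\,\hat\mu_T/\mu_T\}$, and since the empirical density of the $\tilde\theta_j$ near $t$ is $\approx (h^{-1})'(t)=1/h'(h^{-1}(t))$, the perturbation $t(\hat\mu_T/\mu_T-1)$ of the threshold contributes $-\,\frac{t}{h'(h^{-1}(t))}\sqrt T(\hat\mu_T/\mu_T-1)$. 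By Proposition \ref{muchap}, $\sqrt T(\hat\mu_T/\mu_T-1)\to \sigma Z^h/\,(\cdot)$ — more precisely it converges jointly with the $Z_e(t)$ to $\sigma\int_0^{h(1^-)}G(v)\,dv$, because $Z^h=\int_0^{h(1^-)}Z_e(v)\,dv$ (which one checks from the two displayed integral formulas for $Z^h$ and $Z_e$, integrating the latter in $t$ over $[0,h(1^-))$ and using Fubini with the local-time representation). This is the origin of the second term in the stated limit.

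For the functional statement in $D[0,h(1^-))$ I would establish tightness of $\sqrt T(\hat h^{-1}(\cdot)-h^{-1}(\cdot))$ restricted to each $[0,a]$ with $a<h(1^-)$. The natural route is a moment bound of the form $\mathbb{E}\big[\big(\sqrt T(\hat h^{-1}(t_2)-\hat h^{-1}(t_1)) - \sqrt T(h^{-1}(t_2)-h^{-1}(t_1))\big)^{2}\big]\le C|t_2-t_1|$ (or a higher moment with power $>1$ of the increment), obtained by the same regenerative blocking: the relevant functional is $f_{t_1,t_2}=\mathbb{I}_{\{t_1<h(Y_s)\le t_2\}}$, whose occupation time over one regeneration cycle has variance controlled by $|t_2-t_1|$ thanks to $h'\ge c_1$, and then invoking a renewal-reward variance bound as in Theorem VI.3.2 of \cite{As03}. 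Combined with convergence of finite-dimensional distributions and the fact that the limit process $\sigma G(\cdot)$ is continuous (hence the Gaussian limit lives in $C[0,a]$, where $J_1$ and uniform topologies coincide), this yields convergence in law in $D[0,h(1^-))$. The main obstacle, I expect, is precisely the \emph{uniform-in-$t$} control of the two discretization steps — replacing $\hat\theta_j$ by $h(Y_{(j-1)T/k_T})$ inside an indicator whose threshold $t$ ranges over a continuum — since a crude union bound over $t$ loses too much; the fix is to work with the monotone structure of $t\mapsto \hat h^{-1}(t)$ and $t\mapsto h^{-1}(t)$ (both nondecreasing), which lets one upgrade pointwise $o_P(T^{-1/2})$ bounds to uniform ones over compacts via a chaining/monotonicity argument together with the Lipschitz lower bound on $h^{-1}$ coming from $h'\le c_2$.
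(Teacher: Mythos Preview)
Your approach is essentially the paper's: introduce the oracle version $\theta_j=(\hat\mu_T/\mu_T)\hat\theta_j$ and the corresponding $\hat h_e^{-1}$, prove $\sqrt{T}\big(\hat h_e^{-1}(\cdot)-h^{-1}(\cdot)\big)\Rightarrow\sigma G(\cdot)$ by splitting into a Poisson-fluctuation remainder, a block-discretization remainder, and the main ergodic term $\alpha_T(t)=\sqrt{T}\big(T^{-1}\int_0^T\mathbb{I}_{\{h(Y_s)\le t\}}\,ds-h^{-1}(t)\big)$, and then recover $\hat h^{-1}(t)=\hat h_e^{-1}(t\,\hat\mu_T/\mu_T)$ so that the threshold perturbation produces the second term of the limit; the paper carries out this last step via Whitt's composition/inverse theorems rather than your Taylor expansion, but the content is the same.

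Two small corrections are worth making. First, the identity is $\int_0^{h(1^-)}Z_e(v)\,dv=-Z^h$, not $+Z^h$ (integrate and use $\int_0^{h(1^-)}h^{-1}(t)\,dt=h(1^-)-1$); correspondingly $\sqrt{T}(\hat\mu_T/\mu_T-1)\Rightarrow -\sigma\int_0^{h(1^-)}G(v)\,dv$, and together with the \emph{positive} sign $+\frac{t}{h'(h^{-1}(t))}\sqrt{T}(\hat\mu_T/\mu_T-1)$ from the threshold shift you land on the stated limit. Second, for tightness the occupation-time formula gives, for one regeneration cycle, $\mathbb{E}\big[(\int_{\nu_1}^{\nu_2}\mathbb{I}_{\{t_1<h(Y_s)\le t_2\}}\,ds)^2\big]\le c|t_1-t_2|^2\,\mathbb{E}[(L^*)^2]$ with $L^*=\sup_u L_{-1/\sigma,1/\sigma}(u)$, so second moments already yield the exponent $2>1$ needed for the Kolmogorov--Chentsov criterion; there is no need to pass to higher moments. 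Finally, the $L^1$ bounds you derive for the two remainder terms are in fact uniform in $t$ on compacts of $[0,h(1^-))$ (all constants depend only on the bounds for $h$ and $(h^{-1})'$ from H1), so these remainders are $o_P(1)$ uniformly and the monotonicity/chaining argument you anticipate at the end is unnecessary.
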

Note that although there are $k_T$ terms in the sum defining $h^{-1}$ in \eqref{hinv}, the rate of convergence in Theorem \ref{thhinv} is $\sqrt{T}$
which is slower than $\sqrt{k_T}$. This is due to the strong dependence between the $\hat{\theta}_{j}$ within each cycle and the fact that the number of cycles is of order $T$.

The same type of result holds also for $\hat{h}$:
\begin{theorem}
\label{TCL}Under H1 and H2, as $T$ tends to infinity, we have
\begin{equation*}
\sqrt{T}\big( \hat{h}( \cdot) -h( \cdot)\big)
\overset{d}{
\rightarrow } -h^{\prime}(\cdot)\sigma G\big(h(\cdot)\big)+%
h(\cdot)\int_{0}^{h(1^{-})}\sigma G(v) dv,
\end{equation*}%
in $D[0,1)$.
\end{theorem}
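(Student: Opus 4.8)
The plan is to derive Theorem~\ref{TCL} from Theorem~\ref{thhinv} by a functional delta method, exploiting the fact that $\hat h$ is the right-continuous generalized inverse of $\hat h^{-1}$ and $h$ is the (genuine) inverse of $h^{-1}$. First I would observe that under H1 the map $h^{-1}:[0,h(1^-))\to[0,1)$ is a $\mathcal C^1$ increasing bijection with derivative $(h^{-1})'(t)=1/h'(h^{-1}(t))$ bounded between $1/c_2$ and $1/c_1$, so that the inversion operator is, in a suitable sense, Hadamard-differentiable along the relevant sequences. Concretely, writing $\xi_T(\cdot)=\sqrt T(\hat h^{-1}(\cdot)-h^{-1}(\cdot))$, which by Theorem~\ref{thhinv} converges in law in $D[0,h(1^-))$ to the Gaussian limit $\Xi(\cdot):=\sigma G(\cdot)-\frac{(\cdot)}{h'(h^{-1}(\cdot))}\int_0^{h(1^-)}\sigma G(v)\,dv$, I want to show that $\sqrt T(\hat h(u)-h(u))$ converges to $-\Xi(h(u))/(h^{-1})'(h(u)) = -h'(u)\,\Xi(h(u))$, and then simply substitute the explicit form of $\Xi$: since $h^{-1}(h(u))=u$, the first term gives $-h'(u)\sigma G(h(u))$ and the second gives $-h'(u)\cdot\big(-h(u)/h'(u)\big)\int_0^{h(1^-)}\sigma G(v)\,dv=h(u)\int_0^{h(1^-)}\sigma G(v)\,dv$, which is exactly the claimed limit. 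So modulo the inverse-map argument the algebra matches up.

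The core of the work is therefore to justify the inversion step at the level of weak convergence in the Skorohod space, on each compact subinterval $[0,a]\subset[0,1)$. I would proceed by a direct estimate rather than quoting an abstract delta method, because of the generalized-inverse subtlety: for fixed $u$, $\hat h(u)=\hat\theta_{(\lfloor uk_T\rfloor+1)}\le t$ if and only if $\hat h^{-1}(t)\ge (\lfloor uk_T\rfloor+1)/k_T$, i.e. $\hat h$ and $\hat h^{-1}$ are related by the usual inverse-CDF duality. Using this, one writes, for $t$ near $h(u)$,
\begin{equation*}
\{\sqrt T(\hat h(u)-h(u))\le x\}=\{\hat h^{-1}(h(u)+x/\sqrt T)\ge u + O(1/k_T)\},
\end{equation*}
and the $O(1/k_T)$ discretization error is negligible at rate $\sqrt T$ precisely because $k_T T^{1/2}/\mu_T\to 0$ and $T^{p+1/2}/k_T^{p/2}\to 0$ force $\sqrt T/k_T\to 0$. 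Then a first-order Taylor expansion of $h^{-1}$ around $h(u)$ gives $\hat h^{-1}(h(u)+x/\sqrt T)-h^{-1}(h(u)) = \hat h^{-1}(h(u)+x/\sqrt T)-u$, and combining with $\xi_T\Rightarrow\Xi$ and the equicontinuity of $h^{-1}$ yields the one-dimensional convergence with the asserted limit. Tightness and joint (finite-dimensional) convergence in $D[0,a]$ follow from the corresponding properties of $\xi_T$ already established in Theorem~\ref{thhinv}, transported through the uniformly bi-Lipschitz change of variable $t=h(u)$; the continuity of the limit process (hence absence of common jumps) makes the $J_1$ bookkeeping routine.

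The main obstacle I anticipate is precisely the handling of the generalized inverse together with the fact that $\hat h$ is piecewise constant with jumps of size that do not vanish pathwise (the $\hat\theta_{(j)}$ are order statistics of $k_T$ noisy values), whereas the limit is continuous: one must argue that, after centering by $h$ and scaling by $\sqrt T$, these jumps become asymptotically negligible on compacts, which is where the spacing assumptions on $k_T$ relative to $\mu_T$ and $T$ do the real work, and where uniformity in $u\in[0,a]$ has to be controlled rather than just pointwise behavior. A secondary technical point is that the limit in Theorem~\ref{thhinv} lives on $D[0,h(1^-))$ and the endpoint $h(1^-)$ corresponds to $u=1$, so one genuinely needs to stay on compacts $[0,a]$, $a<1$, which is consistent with the statement being in $D[0,1)$; I would make the boundary exclusion explicit throughout. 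Once these points are in place, substituting the explicit covariance-$\rho$ Gaussian process $G$ and simplifying as above completes the proof.
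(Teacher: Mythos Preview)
Your proposal is correct and takes the same approach as the paper: Theorem~\ref{TCL} is deduced from Theorem~\ref{thhinv} via the inverse-map relationship between $\hat h$ and $\hat h^{-1}$, with exactly the algebraic identification of the limit that you wrote out. The only difference is one of packaging: the paper dispatches the inversion step in a single line by citing Theorem~13.7.2 in Whitt~\cite{Wh02} (the inverse map with centering in $D$), whereas you reproduce its content by hand through the quantile/CDF duality and a Taylor expansion.
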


Note that both in Theorem \ref{thhinv} and Theorem \ref{TCL}, the covariance functions of the limiting
processes can be computed explicitly, see Section \ref{reg}.
 
\section{Estimation of the efficient price}\label{theffpr}

Let $t\in (0,T]$. A nice corollary of Theorem \ref{thhinv} is the possibility of estimating $Y_t$, which is the fractional part of the efficient price at time $t$, and therefore the efficient price itself (provided $t$ is an order arrival time). Indeed, we have an estimator $\hat{h}^{-1}$ of $h^{-1}$ and we know how to estimate $h(Y_{t})$, namely we take\footnote{Of course we can take any other time interval of length $T/k_T$ containing $t$. Also, one could probably consider a smoothed version of $\widehat{h(Y_t)}$ using a kernel rather than a rectangle window.} 
$$\widehat{h(Y_t)}=k_{T}\frac{N_{t}-N_{t-T/k_{T}}}{\hat{\mu}_{T}T}.$$
Therefore, it is natural to define an estimator $\widehat{(Y_t)}$ of $Y_t$ by
$$\widehat{Y_t}=\hat{h}^{-1}\big(\widehat{h(Y_t)}\big).$$
The following corollary shows that this estimator is indeed a good estimator of the efficient price $Y_t$.
\begin{corollary}\label{effestim} For $t>0$, under H1 and H2, as $T$ tends to infinity, we have 
$$\sqrt{T}\big(\widehat{Y_t} -Y_t\big)\overset{d}{
\rightarrow}\sigma G\big(h(Y_t)\big),$$
with $G$ independent of $Y_t$.
\end{corollary}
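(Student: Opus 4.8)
The plan is to combine Theorem~\ref{thhinv} with the joint convergence of $\widehat{h(Y_t)}$ toward $h(Y_t)$, and then apply the functional delta method together with a continuous-mapping argument. First I would establish that, jointly with the convergence in Theorem~\ref{thhinv}, one has $\widehat{h(Y_t)}\to h(Y_t)$ in probability at a rate fast enough to be negligible here. Indeed, $\widehat{h(Y_t)}=k_T(N_t-N_{t-T/k_T})/(\hat\mu_T T)$ is exactly the localized version of the $\hat\theta_j$ constructed in Section~\ref{stat}, and the same argument that controls the Poisson approximation error and the fluctuation of $N$ over an interval of length $T/k_T$ (using H2 and the conditions $T^{p+1/2}/k_T^{p/2}\to0$, $k_T T^{1/2}/\mu_T\to0$) shows that $\widehat{h(Y_t)}-h(Y_t)=o_{\mathbb P}(T^{-1/2})$; the key point is that this localized increment depends, asymptotically, only on the Brownian path near time $t$, hence is asymptotically independent of the cycle-structure fluctuations that drive $G$. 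This independence is precisely what yields ``$G$ independent of $Y_t$'' in the statement.

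Next I would treat $\widehat{Y_t}=\hat h^{-1}\big(\widehat{h(Y_t)}\big)$ by writing
\begin{equation*}
\widehat{Y_t}-Y_t=\Big(\hat h^{-1}\big(\widehat{h(Y_t)}\big)-h^{-1}\big(\widehat{h(Y_t)}\big)\Big)+\Big(h^{-1}\big(\widehat{h(Y_t)}\big)-h^{-1}\big(h(Y_t)\big)\Big),
\end{equation*}
since $h^{-1}(h(Y_t))=Y_t$. For the second bracket, a first-order Taylor expansion of $h^{-1}$ around $h(Y_t)$, using that $(h^{-1})'=1/(h'\circ h^{-1})$ is bounded and continuous under H1, combined with $\widehat{h(Y_t)}-h(Y_t)=o_{\mathbb P}(T^{-1/2})$, gives that $\sqrt T$ times this term vanishes in probability. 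For the first bracket, I would use the uniform convergence of $\sqrt T(\hat h^{-1}-h^{-1})$ on compact subsets of $[0,h(1^-))$ from Theorem~\ref{thhinv} (which gives tightness in $D$, hence an asymptotic modulus-of-continuity control), together with $\widehat{h(Y_t)}\to h(Y_t)$, to conclude that $\sqrt T\big(\hat h^{-1}(\widehat{h(Y_t)})-h^{-1}(\widehat{h(Y_t)})\big)$ has the same limit as $\sqrt T\big(\hat h^{-1}(h(Y_t))-h^{-1}(h(Y_t))\big)$. Evaluating the limit process from Theorem~\ref{thhinv} at the (random but asymptotically independent) point $h(Y_t)$ gives
\begin{equation*}
\sigma G\big(h(Y_t)\big)-\frac{h(Y_t)}{h'\big(h^{-1}(h(Y_t))\big)}\int_0^{h(1^-)}\sigma G(v)\,dv.
\end{equation*}

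The remaining point is to see why only the first summand $\sigma G(h(Y_t))$ survives, i.e.\ why the correction term $\tfrac{h(Y_t)}{h'(Y_t)}\int_0^{h(1^-)}\sigma G(v)\,dv$ disappears. This is where I would have to look most carefully. The resolution should be that $\widehat{h(Y_t)}$ is \emph{not} independent of the estimator $\hat h^{-1}$: the same data (the order flow over $[0,T]$) enters both, and the global fluctuation $\int_0^{h(1^-)}\sigma G(v)\,dv$ is, up to normalization, exactly the fluctuation of $\sqrt T(\hat\mu_T/\mu_T-1)$ from Proposition~\ref{muchap} (note $Z^h$ corresponds to $f=h$, i.e.\ to the ``total mass'' direction), which is also the fluctuation transmitted into $\widehat{h(Y_t)}$ through the common normalization by $\hat\mu_T$. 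Tracking this shared term, one finds that the perturbation of $h^{-1}$ at a perturbed argument and the perturbation of the argument itself cancel precisely in the mass direction, leaving only the local Gaussian term $\sigma G(h(Y_t))$. I would make this rigorous by expanding $\widehat{h(Y_t)}=h(Y_t)\hat\mu_T^{-1}\mu_T\cdot(1+o_{\mathbb P}(T^{-1/2}))$ more carefully, i.e.\ $\widehat{h(Y_t)}-h(Y_t)=-h(Y_t)\big(\hat\mu_T/\mu_T-1\big)+o_{\mathbb P}(T^{-1/2})$, plugging this into the second bracket above, and checking that the resulting $\sqrt T$-term is $+\tfrac{h(Y_t)}{h'(Y_t)}\int_0^{h(1^-)}\sigma G(v)\,dv$, exactly opposite to the correction coming from Theorem~\ref{thhinv}. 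The asymptotic independence of $G(h(Y_t))$ from $Y_t$ then follows because $Y_t=\{P_t\}$ is $\mathcal F_t$-measurable while the cycle fluctuations generating $G$ come from excursions of $W$ on all of $[0,T]$ and, by the ergodic/regenerative structure of Section~\ref{reg}, decouple from the value $Y_t$ at a single fixed time. The main obstacle is thus the bookkeeping of the shared-randomness cancellation between $\widehat{h(Y_t)}$ and $\hat h^{-1}$; everything else is delta method plus the tightness already contained in Theorem~\ref{thhinv}.
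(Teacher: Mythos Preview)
Your proposal is internally inconsistent in one important respect. Early on you assert that $\widehat{h(Y_t)}-h(Y_t)=o_{\mathbb P}(T^{-1/2})$ and use this to dispose of the second bracket in your decomposition. That claim is false: since $\widehat{h(Y_t)}=(\mu_T/\hat\mu_T)\widehat{h_e(Y_t)}$ where $\widehat{h_e(Y_t)}$ is the oracle version with $\mu_T$ in the denominator, and since $\sqrt{T}(\hat\mu_T/\mu_T-1)$ has a nondegenerate Gaussian limit by Proposition~\ref{muchap}, one only gets $\widehat{h(Y_t)}-h(Y_t)=O_{\mathbb P}(T^{-1/2})$. You implicitly recognize this later when you write $\widehat{h(Y_t)}-h(Y_t)=-h(Y_t)(\hat\mu_T/\mu_T-1)+o_{\mathbb P}(T^{-1/2})$ and argue that the resulting $\sqrt{T}$-contribution of the second bracket exactly cancels the correction term coming from Theorem~\ref{thhinv}. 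That cancellation argument is correct (the joint convergence needed is contained in Corollary~\ref{cortemp} together with the asymptotic independence of the local increment from the global cycle fluctuations), so your final route does go through once the initial misstatement is dropped.

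The paper's proof avoids this bookkeeping altogether by a different, cleaner trick. It observes the exact algebraic identity $\hat h^{-1}(s\,\mu_T/\hat\mu_T)=\hat h_e^{-1}(s)$, which combined with $\widehat{h(Y_t)}=(\mu_T/\hat\mu_T)\widehat{h_e(Y_t)}$ gives $\widehat{Y_t}=\hat h^{-1}\big(\widehat{h(Y_t)}\big)=\hat h_e^{-1}\big(\widehat{h_e(Y_t)}\big)$. Thus the nuisance ratio $\hat\mu_T/\mu_T$ cancels \emph{before} any expansion is done, and one can apply Proposition~\ref{Propconvhemoinsun} directly, whose limit is just $\sigma G(\cdot)$ with no correction term. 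The remaining piece $h^{-1}\big(\widehat{h_e(Y_t)}\big)-Y_t$ is then genuinely $o_{\mathbb P}(T^{-1/2})$ because $\widehat{h_e(Y_t)}-h(Y_t)=o_{\mathbb P}(T^{-1/2})$ (this is the part where the Poisson and H\"older bounds you invoke are actually used). Your approach and the paper's are equivalent in the end, but the paper's identity sidesteps the cancellation you had to track by hand; conversely, your route makes it transparent \emph{why} the $\hat\mu_T$-induced bias does not show up, which the paper's proof leaves somewhat implicit.
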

Thus, thanks to this approach it is possible to retrieve the efficient price from the order flow, with the usual parametric rate $\sqrt{T}$. Remark that the law of $\sigma G\big(h(Y_t)\big)$ can be explicitly computed. Indeed, conditional on $Y_t$ we have a centered Gaussian random variable with variance $\sigma^2\rho\big(h(Y_t),h(Y_t)\big)$. Then we can use the fact that $Y_t$ is independent of $G$ and uniformly distributed.

\section{One numerical illustration}\label{num}

In order to give an idea of the way our procedure behaves, we give in this section one short numerical illustration.
We take $\sigma=1$, $T=5$, $\mu_T=1000$ and $k_T=150$. We consider two different cases\footnote{Note that strictly speaking, in the second case, the function $h$ does not fully satisfy our assumptions since the derivative in zero is equal to zero.} for the function $h$: $h(u)=2u$ and $h(u)=4u^3$.
We give in Figure 1 examples of estimator of the order flow response function $h$, over one sample path in each case. Averages of estimators together with 95$\%$ confidence intervals obtained from 10 000 Monte Carlo simulations are drawn in Figure 2.

\begin{center}
\begin{tabular}{cc}\label{hchapeau}
\includegraphics[height=7cm,width=6.5cm]{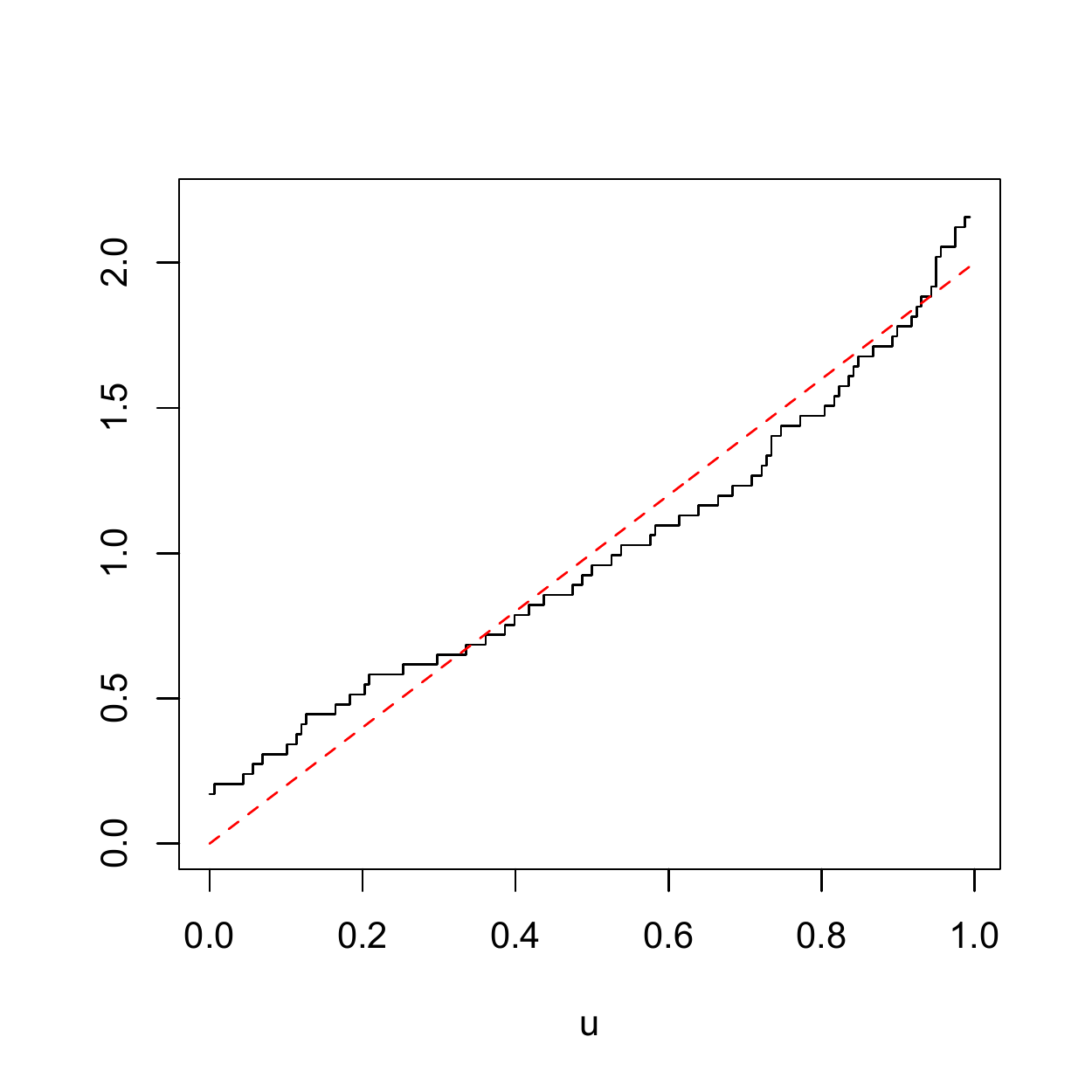} &
\includegraphics[height=7cm,width=6.5cm]{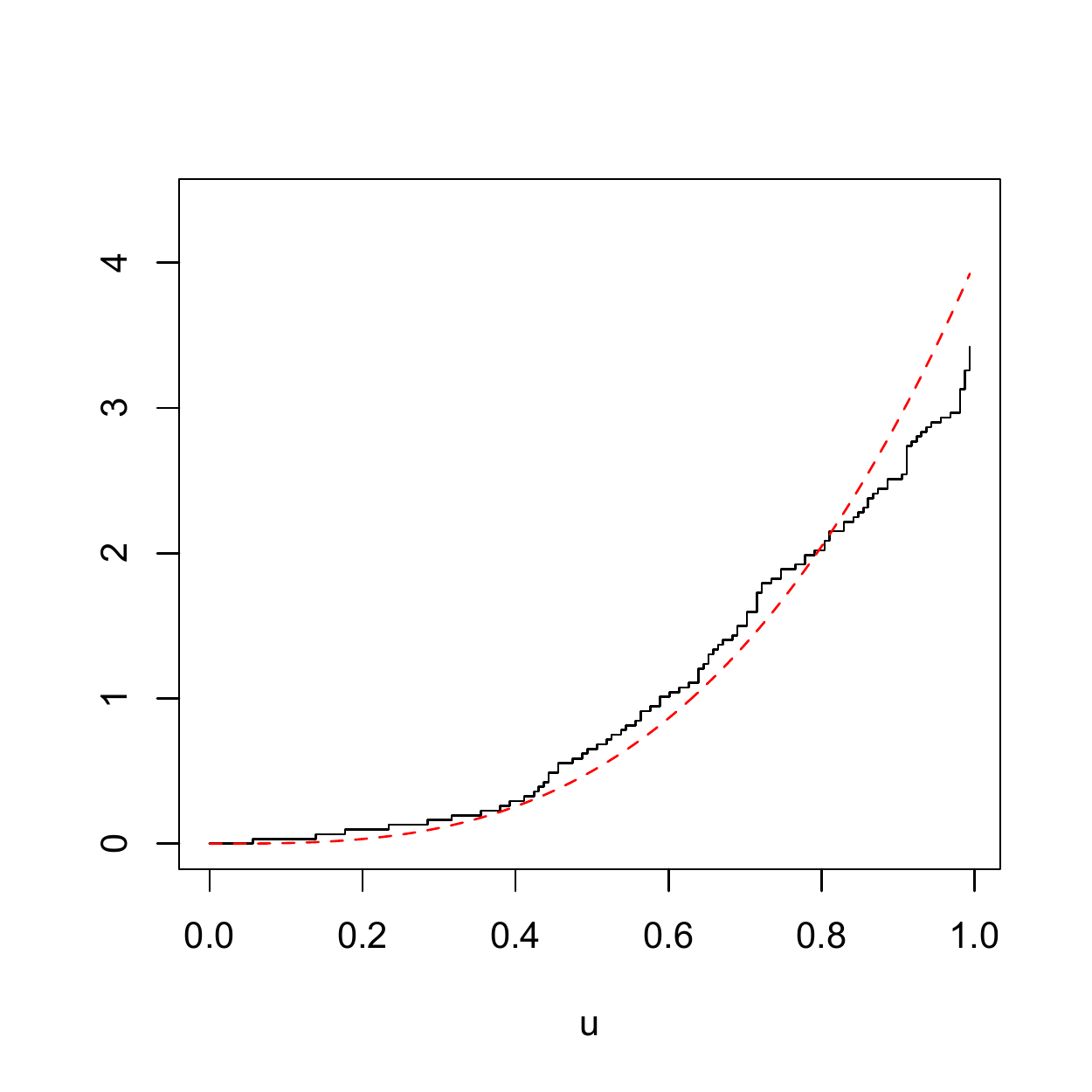}
\end{tabular}
\caption{\textit{One sample path for $h(u)=2u$ (left) and $h(u)=4u^3$ (right). In red the function $h$ and in black its estimation.}}
\begin{tabular}{cc}\label{hchapeaumc}
\includegraphics[height=7cm,width=6.5cm]{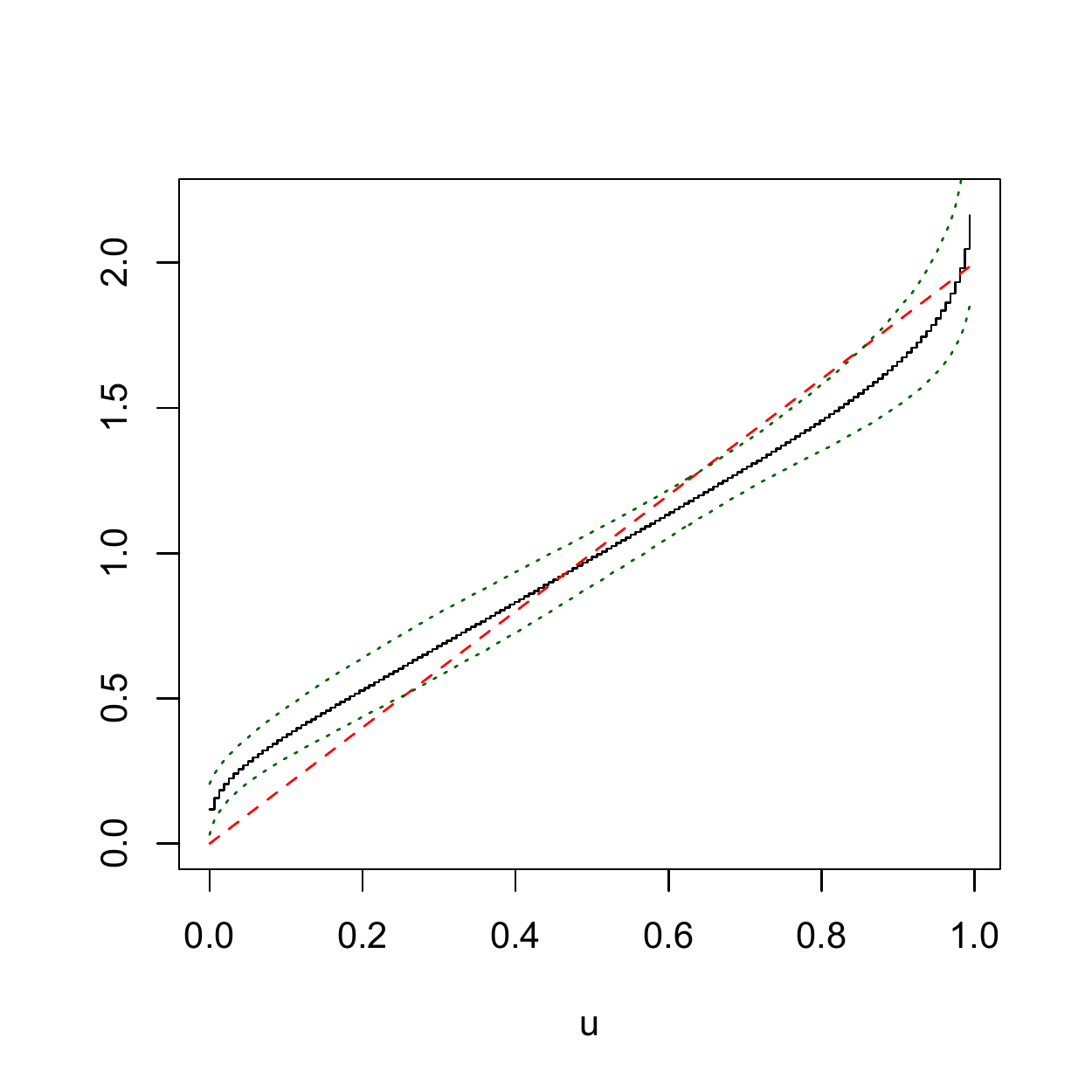} &
\includegraphics[height=7cm,width=6.5cm]{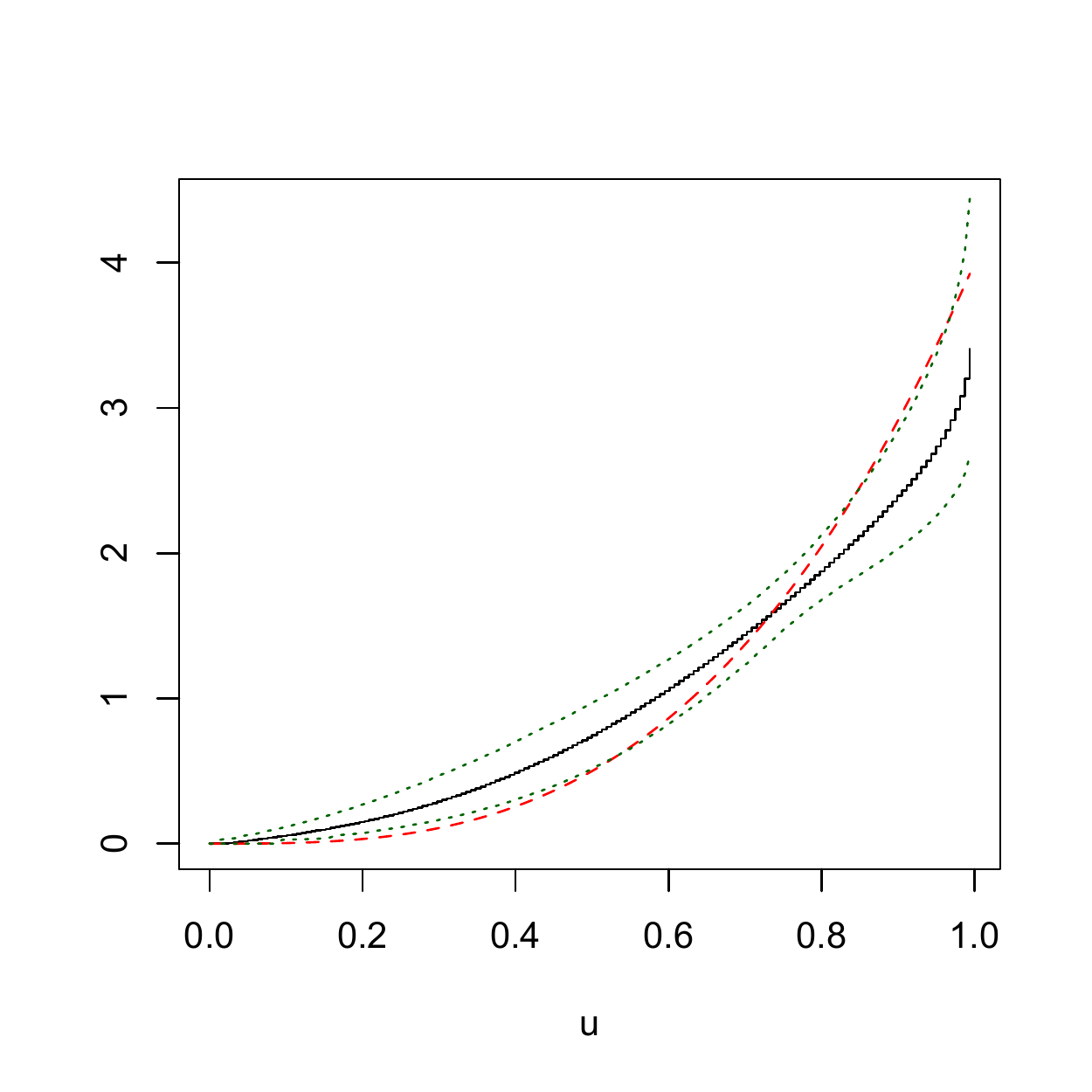}
\end{tabular}
\caption{\textit{10 000 Monte Carlo simulations for $h(u)=2u$ (left) and $h(u)=4u^3$ (right). In red the function $h$, in black its estimation, and in green the 95 $\%$ confidence intervals.}}
\end{center}

We see that in each case, the function $h$ is indeed fairly well estimated. There is of course a positive bias around zero, especially in the first case. This is in particular due to the fact that our estimators cannot be negative, whereas in both cases $h(0)=0$.

\section{Conclusion}\label{conclu}

In this paper, we take advantage of the dynamic information from the order flow in order to estimate an efficient price, whereas the classical
approach is to use the static information from the imbalance of the order book. This is done through a simple model which follows the principle that ``the price is where the volume is not". Indeed we assume that the intensity of the flow at the considered level of the order book (best bid or best ask) is a deterministic function (order flow response function) of the distance between this level and the efficient price. This function is estimated in a non parametric way and the estimation of the efficient price is derived from it. 

Remark that in our approach, the arrival of the orders at the best bid and best ask levels are connected through the influence of the efficient price. However, in the estimating procedures, the bid and ask sides are
treated separately. An interesting extension of this work, left for further research, would be to find a suitable way to combine these estimators, or even to consider a more intricate dependence structure between the order flows at the best bid and best ask levels.

\section{Proofs}\label{proofs}
All the proofs are done under H1 and H2. In the following, $c$ denotes a constant that may vary from line to line, and
even in the same line.

\subsection{\label{Proofmuchap}Proof of Proposition \ref{muchap}}
We give here an elementary proof of Proposition \ref{muchap}. 
However, note that this result can also be obtained as a by product of Corollary \ref{cortemp} shown in Section \ref{pt1t12}.

The characteristic function of $\sqrt{T}(\hat{\mu}_{T}/\mu
_{T}-1) $ satisfies 
$$\mathbb{E}\big[\exp\big(it\sqrt{T}( \hat{\mu}_{T}/\mu
_{T}-1)\big)\big]=\exp(-it\sqrt{T})\mathbb{E}\Big[\mathbb{E}_Y\big[
\exp\big(it\frac{N_{T}}{\mu _{T}\sqrt{T}}\big)\big]\Big],$$ 
where $\mathbb{E}_Y$ denotes the expectation conditional of the path of $(Y_t)$. Since conditional
on the path of $(Y_t)$, $N_{T}$ follows a Poisson random variable with parameter
$$\mu_T\int_{0}^Th(Y_t)dt,$$ the characteristic function is equal to
$$\exp(-it\sqrt{T})\mathbb{E}\Big[
\exp\Big((\text{e}^{it(\mu _{T}\sqrt{T})^{-1}}-1)\mu_T\int_{0}^Th(Y_t)dt\Big)\Big].$$
Now 
$$\big|\text{e}^{it(\mu _{T}\sqrt{T})^{-1}}-1-it(\mu  _{T}\sqrt{T})^{-1}\big|\leq t^2(\mu _{T}\sqrt{T})^{-2}.$$
Using the boundedness of $h$ and the fact that $\mu_T$ goes to infinity, we get that the limit as $T$ tends to infinity of the characteristic function is given by
$$\underset{T\rightarrow+\infty}{\text{lim}}\mathbb{E}\Big[
\exp\Big((it\sqrt{T})\big(\frac{1}{T}\int_{0}^Th(Y_t)dt-1\big)\Big)\Big].$$
From Equation \eqref{CLT1}, this is equal to 
$$\exp\big(-\frac{\sigma^2t^{2}}{2}\text{Var}[Z^h]\big).$$

\subsection{An auxiliary result}
We now give an auxiliary result from which Theorem \ref{thhinv} and Theorem \ref{TCL} will be easily deduced.
For $j=1,\ldots ,k_{T}$, we set
\begin{equation*}
\theta _{j}=\frac{k_{T}}{\mu _{T}T}(
N_{jT/k_{T}}-N_{(j-1)T/k_{T}}) =\frac{\hat{\mu}_{T}}{\mu _{T}}\hat{%
\theta}_{j},
\end{equation*}
and define the function $\hat{h}_{e}$ by%
\begin{equation*}
\hat{h}_{e}(u)=\theta _{(\lfloor uk_{T}\rfloor+1) }=%
\hat{h}(u)\frac{\hat{\mu}_{T}}{\mu _{T}},
\end{equation*}%
for $u\in \lbrack 0,1)$. Note that \begin{equation*}
\frac{\hat{\mu}_{T}}{\mu _{T}}=\frac{N_{T}}{T\mu _{T}}=\frac{1}{k_{T}}%
\sum_{j=1}^{k_{T}}\theta _{j}=\int_{0}^{1}\hat{h}_{e}(u)du.
\end{equation*}%
We also define 
\begin{equation*}
\hat{h}_{e}^{-1}(t )=\frac{1}{k_{T}}\sum_{j=1}^{k_{T}}\mathbb{I}_{\{
\theta _{j}\leq t\} },
\end{equation*}%
which satisfies 
\begin{equation*}
\hat{h}^{-1}(t )=\hat{h}_{e}^{-1}\big( t \frac{\hat{\mu}_{T}}{\mu
_{T}}\big).
\end{equation*}
We have the following proposition:
\begin{proposition}
\label{Propconvhemoinsun}As $T$ tends to infinity, we have
\begin{equation*}
\sqrt{T}\big(\hat{h}_{e}^{-1}( \cdot) -h^{-1}( \cdot
)\big) \overset{d}{\rightarrow} \sigma G( \cdot), 
\end{equation*}%
in $D[0,h(1^{-}))$.
\end{proposition}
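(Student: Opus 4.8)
The plan is to deduce the statement from the functional version of the central limit theorem \eqref{CLT1}, applied to the family of functions $f_t=\mathbb{I}_{\{h(\cdot)\leq t\}}$, $t\in[0,h(1^{-}))$. I would introduce the intermediate ``continuous observation'' statistic
$$\widetilde{h}_e^{-1}(t)=\frac{1}{T}\int_{0}^{T}\mathbb{I}_{\{h(Y_s)\leq t\}}\,ds,$$
and recall, as observed just before \eqref{defrho}, that $\sigma^{2}\mathbb{E}\big[\int_{0}^{\tau_1}f_t(\{\sigma W_s\})\,ds\big]=\int_{0}^{1}\mathbb{I}_{\{h(s)\leq t\}}\,ds=h^{-1}(t)$ and that the centered cycle functional attached to $f_t$ is exactly $Z_e(t)$. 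The proof then splits into two parts: (i) show that $\sqrt{T}\,\sup_{t\leq a}\big|\hat{h}_e^{-1}(t)-\widetilde{h}_e^{-1}(t)\big|\rightarrow 0$ in probability for every $a<h(1^{-})$; and (ii) prove $\sqrt{T}\big(\widetilde{h}_e^{-1}(\cdot)-h^{-1}(\cdot)\big)\overset{d}{\rightarrow}\sigma G(\cdot)$ in $D[0,h(1^{-}))$. Slutsky's lemma combines the two and gives the proposition.

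For (i) I would argue conditionally on the path of $(Y_s)_{s\leq T}$. Conditionally, the increments $N_{jT/k_T}-N_{(j-1)T/k_T}$ are independent Poisson variables, so the $\theta_j$ are conditionally independent with conditional mean $\overline{h}_j:=\tfrac{k_T}{T}\int_{I_j}h(Y_s)\,ds$, where $I_j=[(j-1)T/k_T,jT/k_T)$, and conditional variance of order $k_T/(\mu_T T)$. I would then write
$$\hat{h}_e^{-1}(t)-\widetilde{h}_e^{-1}(t)=\frac{1}{k_T}\sum_{j}\big(\mathbb{I}_{\{\theta_j\leq t\}}-\mathbb{P}_Y(\theta_j\leq t)\big)+\frac{1}{k_T}\sum_{j}\big(\mathbb{P}_Y(\theta_j\leq t)-\mathbb{I}_{\{\overline{h}_j\leq t\}}\big)+\frac{1}{k_T}\sum_{j}\Big(\mathbb{I}_{\{\overline{h}_j\leq t\}}-\frac{k_T}{T}\int_{I_j}\mathbb{I}_{\{h(Y_s)\leq t\}}\,ds\Big).$$
The first term is, conditionally, a centered bounded empirical process indexed by half-lines, hence $O_P(k_T^{-1/2})$ uniformly in $t$ (half-lines forming a VC class), so $\sqrt{T}$ times it is $O_P((T/k_T)^{1/2})\rightarrow 0$. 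The second term is a Poisson-concentration term: $|\mathbb{P}_Y(\theta_j\leq t)-\mathbb{I}_{\{\overline{h}_j\leq t\}}|$ is exponentially small unless $\overline{h}_j$ lies within $O((k_T/(\mu_T T))^{1/2}\log k_T)$ of $t$, and, using $h'\geq c_1$, the stationarity of $(Y_s)$ and the Brownian oscillation bound below, the number of such indices $j$ is controlled so that $\sqrt{T}$ times this term tends to $0$ under the conditions $k_TT^{1/2}/\mu_T\rightarrow 0$ and $T^{p+1/2}/k_T^{p/2}\rightarrow 0$. The third term is the discretization error: since $h$ is Lipschitz and $\overline{h}_j\in[\min_{I_j}h(Y),\max_{I_j}h(Y)]$, the $j$-th summand vanishes unless $h(Y_{(j-1)T/k_T})$ is within $\delta_T:=\sigma\sup_j\sup_{s,s'\in I_j}|W_s-W_{s'}|$ of $t$; the modulus of continuity of Brownian motion on $[0,T]$ gives $\delta_T=O_P((T\log k_T/k_T)^{1/2})$, and a maximal occupation-count estimate for $(Y_s)$ (again via $h'\geq c_1$) bounds the third term by a constant times $\delta_T$ uniformly in $t$, so $\sqrt{T}$ times it is of order $(T^2\log k_T/k_T)^{1/2}\rightarrow 0$ because $T^{p+1/2}/k_T^{p/2}\rightarrow 0$ forces $k_T$ to grow faster than $T^{2+1/p}$. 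This is the main obstacle: it is exactly where H2 and the two conditions on $k_T$ enter, and genuine care is needed to make all bounds uniform in $t$ over compacts of $[0,h(1^{-}))$.

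For (ii) I would use the regenerative structure of Section \ref{reg}. Decomposing the integral over the cycles $[\nu_n,\nu_{n+1})$ and using the strong Markov property, the quantities $\int_{\nu_n}^{\nu_{n+1}}\mathbb{I}_{\{h(Y_s)\leq t\}}\,ds$ are i.i.d.\ for $n\geq 1$ and, jointly in $t$ and together with the cycle length, distributed as $Z_e(t)+\tau_1 h^{-1}(t)$, with $\mathbb{E}[\tau_1]=1/\sigma^2$. Finite-dimensional convergence of $\sqrt{T}\big(\widetilde{h}_e^{-1}(\cdot)-h^{-1}(\cdot)\big)$ to a centered Gaussian vector with covariance $\sigma^{2}\rho$ then follows from Theorems VI.3.1--VI.3.2 of \cite{As03} applied coordinatewise, together with the Cram\'er--Wold device, the contributions of the last incomplete cycle and of the fluctuation of the number of renewals being negligible by the Anscombe-type arguments already used for \eqref{CLT1}. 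For tightness in $D[0,h(1^{-}))$ and continuity of the limit $G$, it suffices to establish a uniform-in-$T$ second-moment bound $\mathbb{E}\big[(X_n(t_2)-X_n(t_1))^{2}\big]\leq C|t_2-t_1|^{2}$ on each compact for $X_n=\sqrt{T}(\widetilde{h}_e^{-1}-h^{-1})$, which reduces via the cycle decomposition to $\mathbb{E}\big[(Z_e(t_2)-Z_e(t_1))^{2}\big]\leq C|t_2-t_1|^{2}$; the latter follows from $Z_e(t_2)-Z_e(t_1)=\int_{0}^{\tau_1}\big(\mathbb{I}_{\{t_1<h(\{\sigma W_s\})\leq t_2\}}-(h^{-1}(t_2)-h^{-1}(t_1))\big)\,ds$, the occupation time formula (the occupation time being the integral of $L_{-1/\sigma,1/\sigma}$ over a set of Lebesgue measure at most $2(h^{-1}(t_2)-h^{-1}(t_1))/\sigma\leq 2(t_2-t_1)/(c_1\sigma)$, using $h'\geq c_1$), and the fact that $\tau_1$ and $\sup_x L_{-1/\sigma,1/\sigma}(x)$ have finite moments of all orders (the Laplace transform of $\tau_1$ being the one displayed in Section \ref{reg}). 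Billingsley's criterion then yields both tightness in $D$ and a continuous Gaussian limit, which finishes the argument.
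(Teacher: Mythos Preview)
Your overall strategy coincides with the paper's: write $\hat h_e^{-1}-h^{-1}=(\hat h_e^{-1}-\tilde h_e^{-1})+(\tilde h_e^{-1}-h^{-1})$, kill the first bracket and apply the regenerative CLT plus an occupation-time tightness bound to the second. Part (ii) of your sketch is essentially Lemma~\ref{Lemma2} of the paper, including the moment bound on $Z_e(t_2)-Z_e(t_1)$ via the local-time representation and the finiteness of moments of $\sup_u L_{-1/\sigma,1/\sigma}(u)$.

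For part (i) the paper uses a slightly different split: it merges your first two summands into a single term $T_1=\tfrac1{k_T}\sum_j(\1_{\{\theta_j\le t\}}-\1_{\{\bar h_j\le t\}})$ and controls it by an elementary $L^1$ bound (Chebyshev on the Poisson increments, plus a stationarity estimate on the probability that $\bar h_j$ lands in a window of width $\eps_T$ around $t$); your third summand is its $T_2$, handled by the same thin-window argument. The paper works pointwise in $t$ (this suffices for finite-dimensional convergence, tightness being carried entirely by $\alpha_T$), whereas you aim directly for a uniform-in-$t$ bound via VC/empirical-process tools. Both routes are viable; yours has the advantage of making explicit why $\sqrt T(\hat h_e^{-1}-\tilde h_e^{-1})$ is tight, which the paper leaves implicit.

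There is, however, one genuine gap in your discretization step. You assert that the $j$-th summand of the third term vanishes unless $h(Y_{(j-1)T/k_T})$ lies within $\delta_T=\sigma\sup_j\sup_{s,s'\in I_j}|W_s-W_{s'}|$ of $t$. This is false on those intervals $I_j$ in which $P$ crosses an integer: there $Y=\{P\}$ jumps by $\pm1$, so $h(Y_\cdot)$ can range over the whole of $[h(0),h(1^-))$ even though the Brownian increment is tiny, and the summand need not vanish. The paper accounts for exactly this via the extra $\delta^{1/2}$ term in the estimate
\[
\mathbb P\Big[\sup_{u\in[s,s+\delta]}|Y_u-Y_s|\ge\lambda\Big]\le c_p\big(\delta^{p/2}\lambda^{-p}+\delta^{1/2}\big),
\]
which is precisely the probability of an integer crossing in a window of length $\delta$; this produces a contribution $cT/\sqrt{k_T}$ to $\sqrt T\,\mathbb E|T_2|$, and it is here that one genuinely needs $k_T\gg T^2$ (implied by $T^{p+1/2}/k_T^{p/2}\to0$). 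Your argument is easily repaired by separating out the intervals containing a crossing---their expected number is $O(\sqrt{Tk_T})$, yielding the same $T/\sqrt{k_T}$ after division by $k_T$ and multiplication by $\sqrt T$---but as written the step does not go through. The same omission affects your second-term count of indices with $\bar h_j$ near $t$.
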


By Theorem 15.1 in \cite{Bi68}, it is enough to prove that
the finite dimensional distributions converge and that a tightness criterion
holds. The proof is splitted into two Lemmas: Lemma \ref{Lemma1} for the
convergence of the finite dimensional distributions and Lemma \ref{Lemma2} for the
C-tightness. 

\begin{lemma}
\label{Lemma1}For any $n\in \mathbb{N}^{\ast }$, for any $( t
_{1},\ldots ,t_{n}) \in \lbrack 0,h(1^{-}))^{n}$, as $T\rightarrow
\infty $,
\begin{equation*}
\sqrt{T}\big( \hat{h}_{e}^{-1}(t _{i}) -h^{-1}(
t_{i})\big) _{i=1,\ldots ,n}\overset{d}{\rightarrow }N(
0,\sigma^2\nu) 
\end{equation*}%
where $\nu _{i,j}=\rho \left( t_{i},t_{j}\right)$, with $\rho$ the covariance function defined in \eqref{defrho}.
\end{lemma}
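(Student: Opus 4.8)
The plan is to establish the multivariate central limit theorem for $\sqrt{T}\big(\hat{h}_e^{-1}(t_i)-h^{-1}(t_i)\big)_{i=1,\dots,n}$ by comparing $\hat{h}_e^{-1}$ with an idealized statistic built from the true (unobserved) values $h(Y_{(j-1)T/k_T})$, and then applying the regenerative CLT \eqref{CLT1} to that idealized statistic. First I would introduce the ``oracle'' quantities $\xi_j = h(Y_{(j-1)T/k_T})$ and the associated empirical process $\tilde{h}_e^{-1}(t) = \frac{1}{k_T}\sum_{j=1}^{k_T}\mathbb{I}_{\{\xi_j\le t\}}$. Since the $Y_{(j-1)T/k_T}$ are (marginally) uniform on $[0,1]$ and $h$ is strictly increasing by H1, each $\mathbb{I}_{\{\xi_j\le t\}} = \mathbb{I}_{\{Y_{(j-1)T/k_T}\le h^{-1}(t)\}}$, so $\tilde h_e^{-1}(t)$ is a Riemann-type average of $\mathbb{I}_{\{h(Y_s)\le t\}}$ along the path, and $\mathbb{E}[\mathbb{I}_{\{h(Y_s)\le t\}}] = h^{-1}(t)$. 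The first step is therefore to show that $\sqrt{T}\big(\tilde h_e^{-1}(t) - \frac1T\int_0^T \mathbb{I}_{\{h(Y_s)\le t\}}\,ds\big)\to 0$, i.e. that replacing the Riemann sum over the grid $\{(j-1)T/k_T\}$ by the time integral is negligible at rate $\sqrt T$; this uses the modulus of continuity of Brownian motion together with the growth condition $T^{p+1/2}/k_T^{p/2}\to 0$, which forces $k_T$ large enough relative to $T$ that the grid is fine. Then by \eqref{CLT1} applied to the functions $f_i = \mathbb{I}_{\{h\le t_i\}}$ (whose limiting variables are exactly the $Z_e(t_i)$ introduced in Section \ref{reg}), the vector $\sqrt{T}\big(\frac1T\int_0^T \mathbb{I}_{\{h(Y_s)\le t_i\}}ds - h^{-1}(t_i)\big)_i$ converges to $N(0,\sigma^2\nu)$ with $\nu_{i,j}=\rho(t_i,t_j)$, after invoking the Cramér–Wold device to reduce the multivariate statement to the one-dimensional CLT \eqref{CLT1} applied to linear combinations $\sum_i \lambda_i f_i$.

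The second, and genuinely harder, step is to control the difference between the observed $\hat h_e^{-1}$ and the oracle $\tilde h_e^{-1}$, i.e. to show
$$\sqrt{T}\,\sup_{t}\big|\hat h_e^{-1}(t) - \tilde h_e^{-1}(t)\big| \overset{\mathbb P}{\longrightarrow} 0$$
on compacts of $[0,h(1^-))$. The point is that $\hat h_e^{-1}(t) - \tilde h_e^{-1}(t) = \frac1{k_T}\sum_j\big(\mathbb{I}_{\{\theta_j\le t\}} - \mathbb{I}_{\{\xi_j\le t\}}\big)$, and each summand is nonzero only when $\theta_j$ and $\xi_j=h(Y_{(j-1)T/k_T})$ straddle the level $t$. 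So one needs: (a) a bound on the conditional fluctuation $\theta_j - \mathbb{E}_Y[\theta_j\mid\text{path}]$ — conditionally on the path, $\mu_T T\,\theta_j/k_T$ is Poisson, so a Bernstein/Chernoff estimate gives $|\theta_j - \bar\theta_j|\lesssim \sqrt{k_T/(\mu_T T)}\cdot(\log)$ terms uniformly in $j$ with high probability; (b) a bound on the ``bias'' $\bar\theta_j := \mathbb{E}_Y[\theta_j] = \frac{k_T}{T}\int_{(j-1)T/k_T}^{jT/k_T}h(Y_s)\,ds$ versus $\xi_j = h(Y_{(j-1)T/k_T})$, controlled by $h'$ bounded (H1) and the Brownian oscillation over an interval of length $T/k_T$, giving roughly $|\bar\theta_j-\xi_j|\lesssim\sqrt{T/k_T}\cdot(\log)$. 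Combining, $\max_j|\theta_j-\xi_j|$ is of order $\sqrt{T/k_T}$ up to logarithms, which tends to $0$; the number of indices $j$ for which $\theta_j,\xi_j$ straddle a fixed level is then, by the near-uniformity of the $\xi_j$ on $[0,1]$ and the Lipschitz property of $h^{-1}$, at most of order $k_T\cdot\max_j|\theta_j-\xi_j|$, so $|\hat h_e^{-1}(t)-\tilde h_e^{-1}(t)|\lesssim \max_j|\theta_j-\xi_j|$, and multiplying by $\sqrt T$ still goes to $0$ precisely because of the conditions $T^{p+1/2}/k_T^{p/2}\to 0$ and $k_T T^{1/2}/\mu_T\to 0$ imposed on $k_T$. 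Making the ``number of straddling indices'' bound rigorous — uniformly in $t$ over a compact, and accounting for the fact that the $\xi_j$ are only asymptotically uniform (their empirical distribution converging at the same $\sqrt T$ rate we are trying to establish) — is the main obstacle, and I would handle it by a chaining/monotonicity argument over a fine deterministic grid of levels $t$, exploiting that both $\hat h_e^{-1}$ and $\tilde h_e^{-1}$ are monotone nondecreasing in $t$.

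Finally I would assemble the pieces: write
$$\sqrt{T}\big(\hat h_e^{-1}(t_i) - h^{-1}(t_i)\big) = \sqrt{T}\big(\hat h_e^{-1}(t_i) - \tilde h_e^{-1}(t_i)\big) + \sqrt{T}\big(\tilde h_e^{-1}(t_i) - \tfrac1T\!\int_0^T\!\mathbb{I}_{\{h(Y_s)\le t_i\}}ds\big) + \sqrt{T}\big(\tfrac1T\!\int_0^T\!\mathbb{I}_{\{h(Y_s)\le t_i\}}ds - h^{-1}(t_i)\big),$$
where the first two terms vanish in probability by the steps above and the third converges to $N(0,\sigma^2\nu)$; Slutsky then gives the claim. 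The identification of the limiting covariance as $\rho(t_i,t_j)$ is immediate from the definition \eqref{defrho} of $\rho$ via $Z_e$, since $Z^{f_i}=Z_e(t_i)$ for $f_i=\mathbb{I}_{\{h\le t_i\}}$. I would flag that the two technical estimates — the Poisson concentration of $\theta_j$ and the Brownian-oscillation bound on $\bar\theta_j-\xi_j$ — together with the straddling-count argument, are exactly where the precise form of H2 and the constraints on $k_T$ enter, and that all the ``$\log$'' factors are harmless because the polynomial gains $T^{p+1/2}/k_T^{p/2}\to 0$ (any $p>0$) and $k_T T^{1/2}/\mu_T\to 0$ are strict.
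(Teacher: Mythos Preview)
Your proposal is correct and shares the paper's overall architecture: split $\hat h_e^{-1}(t)-h^{-1}(t)$ into a negligible ``observation error'' piece, a negligible ``discretization'' piece, and the main term $\frac1T\int_0^T\mathbb{I}_{\{h(Y_s)\le t\}}ds-h^{-1}(t)$, to which the regenerative CLT \eqref{CLT1} applies (Cram\'er--Wold for the multivariate statement). The differences with the paper are in the choice of intermediate and in the technical execution.

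\emph{Choice of intermediate.} You pivot on the pointwise values $\xi_j=h(Y_{(j-1)T/k_T})$; the paper pivots instead on the block averages $\bar\theta_j=\frac{k_T}{T}\int_{(j-1)T/k_T}^{jT/k_T}h(Y_u)\,du$. This changes the labelling of the two negligible pieces but not their content: with $\bar\theta_j$ the first piece ($T_1$ in the paper) isolates pure Poisson fluctuation $\theta_j-\bar\theta_j$, and the second ($T_2$) compares $\mathbb{I}_{\{\bar\theta_j\le t\}}$ with $\frac{k_T}{T}\int\mathbb{I}_{\{h(Y_s)\le t\}}ds$; with your $\xi_j$ the split mixes Poisson noise and Brownian oscillation in the first piece. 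Either works.

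\emph{Technical execution.} You go through uniform high-probability bounds ($\max_j|\theta_j-\xi_j|$ via Bernstein for the Poisson part and Brownian modulus of continuity for the bias), followed by a straddling-count/chaining argument. The paper instead bounds each term in $L^1$ directly: for each $j$ it writes $\mathbb{E}[|\mathbb{I}_{\{\theta_j\le t\}}-\mathbb{I}_{\{\bar\theta_j\le t\}}|]\le A_1+A_2$, where $A_1=\mathbb{P}[\,|\bar\theta_j-t|\le\varepsilon_T\,]$ is controlled by the marginal uniformity of $Y$ (so $A_1\le c\varepsilon_T$), and $A_2=\mathbb{P}[\,|\theta_j-\bar\theta_j|>\varepsilon_T\,]$ by conditional Poisson variance; optimizing over $\varepsilon_T$ gives $\sqrt T\,\mathbb{E}|T_1|\to0$ under the stated conditions on $k_T$, and $T_2$ is handled the same way. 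This $L^1$ route sidesteps the issue you flag about ``near-uniformity of the $\xi_j$'': one never needs to count straddling indices on a sample path, only to bound the marginal probability that a single $\xi_j$ (or $\bar\theta_j$) lies near $t$, which is immediate from stationarity. Your uniform-bound approach does have the advantage that it points naturally toward the tightness statement (Lemma~\ref{Lemma2}), whereas the paper proves tightness separately via the regenerative structure and local-time moments.
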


\begin{proof}
We give the proof for $n=1$. The other cases are obtained
the same way by linear combinations of the components of $\sqrt{T}\big( 
\hat{h}_{e}^{-1}( t_{i}) -h^{-1}( t_{i})
\big) _{i=1,\ldots ,n}$.

We write $\hat{h}_{e}^{-1}(t )-h^{-1}(t)=T_1+T_2+T_3$, with%
\begin{align*}
T_1=&\frac{1}{%
k_{T}}\sum_{j=1}^{k_{T}}\mathbb{I}_{\{ \theta _{j}\leq t\}
}-\frac{1}{k_{T}}\sum_{j=1}^{k_{T}}\mathbb{I}_{\{ \frac{k_{T}}{T}%
\int_{( j-1) T/k_{T}}^{jT/k_{T}}h( Y_{u}) du\leq
t \} }, \\
T_2=&\frac{1}{k_{T}}\sum_{j=1}^{k_{T}}\mathbb{I}_{\{ \frac{k_{T}}{T%
}\int_{( j-1) T/k_{T}}^{jT/k_{T}}h( Y_{u}) du\leq
t\} }-\frac{1}{T}\sum_{j=1}^{k_{T}}\int_{( j-1)
T/k_{T}}^{jT/k_{T}}\mathbb{I}_{\{ h( Y_{u}) \leq t
\} }du, \\
T_3=&\frac{1}{T}\int_{0}^{T}\mathbb{I}_{\{ h( Y_{u})
\leq t\} }du-h^{-1}( t).
\end{align*}
We study each component separately.
We have%
\begin{equation*}
T_1=\frac{1}{k_{T}}\sum_{j=1}^{k_{T}}v_{j,k_{T},T}
\end{equation*}%
with $v_{j,k_{T},T}=0$ or 

\qquad - $v_{j,k_{T},T}=1$ if 
\begin{equation*}
\theta _{j}\leq t <\frac{k_{T}}{T}\int_{( j-1)
T/k_{T}}^{jT/k_{T}}h( Y_{t}) dt,
\end{equation*}

\qquad - $v_{j,k_{T},T}=-1$ if%
\begin{equation*}
\frac{k_{T}}{T}\int_{( j-1) T/k_{T}}^{jT/k_{T}}h(
Y_{t}) dt\leq t <\theta _{j}.
\end{equation*}
For $\varepsilon_T>0$, we have $\mathbb{E}[|v_{j,k_{T},T}|]\leq A_1+A_2$, with
\begin{align*}
A_1&=\mathbb{P}\big[|\frac{k_T}{T}\int_{(j-1)T/k_T}^{jT/k_T}h(Y_u)du-t\big|\leq\varepsilon_T\big],\\
A_2&=\mathbb{E}\big[ 
\mathbb{I}_{\{ v_{j,k_{T},T}=\pm 1\} }\mathbb{I}_{\{
|t -\frac{k_{T}}{T}\int_{( j-1)
T/k_{T}}^{jT/k_{T}}h( Y_{t}) dt| >\varepsilon
_{T}\} }\big].
\end{align*}
We have
$$\big|\frac{k_T}{T}\int_{(j-1)T/k_T}^{jT/k_T}h(Y_t)dt-h(Y_{(j-1)T/k_T})\big|\leq c\frac{k_T}{T}\int_{(j-1)T/k_T}^{jT/k_T}|Y_t-Y_{(j-1)T/k_T}|dt.$$
Now remark that from Markov inequality and the fact that $Y_s$ is uniformly distributed, we easily get that for any $p>0$, $\delta>0$, $\lambda>0$:
$$\mathbb{P}\big[\underset{t\in[s,s+\delta]}{\text{sup}}|Y_t-Y_s|\geq \lambda\big]\leq c_p(\delta^{p/2}\lambda^{-p}+\delta^{1/2}).$$
Therefore, for $p>0$, 
$$\mathbb{P}\big[|\frac{k_T}{T}\int_{(j-1)T/k_T}^{jT/k_T}h(Y_t)dt-h(Y_{(j-1)T/k_T})|\geq\varepsilon_T\big]\leq c\frac{(T/k_T)^{p/2}}{\varepsilon_T^p}+c(T/k_T)^{1/2}.$$
Thus
$$A_1\leq\mathbb{P}\big[|h(Y_{(j-1)T/k_T})-t\big|\leq2\varepsilon_T\big]+c\frac{(T/k_T)^{p/2}}{\varepsilon_T^p}+c(T/k_T)^{1/2}.$$
Using the stationary distribution of $Y_{(j-1)T/k_T}$ we obtain that for $\varepsilon_T$ small enough,  
$$A_1\leq h^{-1}(t+2\varepsilon_T)-h^{-1}(t-2\varepsilon_T)+c\frac{(T/k_T)^{p/2}}{\varepsilon_T^p}+c(T/k_T)^{1/2}\leq c\varepsilon_T+c\frac{(T/k_T)^{p/2}}{\varepsilon_T^p}+c(T/k_T)^{1/2}.$$

We now turn to $A_2$. Remarking that $|v_{j,k_{T},T}|=1$ implies
$$\big|\theta_j-\frac{k_{T}}{T}\int_{( j-1) T/k_{T}}^{jT/k_{T}}h(
Y_{t}) dt\big|\geq \big|t-\frac{k_{T}}{T}\int_{( j-1) T/k_{T}}^{jT/k_{T}}h(
Y_{t}) dt\big|,$$ we get
$$A_2\leq \mathbb{P}\big[\big|\theta_j-\frac{k_{T}}{T}\int_{( j-1) T/k_{T}}^{jT/k_{T}}h(
Y_{t}) dt\big|> \varepsilon
_{T}\big].$$
Recall that conditional on the path of $(Y_t)$, $\theta _{j}\overset{d}{=}Zk_T/(\mu _{T}T)$, where $Z$ is a Poisson random variable with parameter
$$\mu
_{T}\int_{(j-1)T/k_{T}}^{jT/k_{T}}h(Y_{t})dt.
$$
Thus we obtain
$$A_2\leq c\frac{k_{T}}{\mu_TT\varepsilon_T^2}.$$
Using that \begin{equation*}
\mathbb{E}[|T_1|]\leq \frac{1}{k_{T}}\sum_{j=1}^{k_{T}}%
\mathbb{E}[|v_{j,k_{T},T}|],
\end{equation*}%
we finally get
$$\mathbb{E}[|\sqrt{T}T_1|]\leq c\varepsilon_T\sqrt{T}+c\frac{T^{(p+1)/2}}{\varepsilon_T^pk_T^{p/2}}+c\frac{k_{T}}{\mu_T\sqrt{T}\varepsilon_T^2}+c\frac{T}{\sqrt{k_T}}.$$
We take $\varepsilon_T=\zeta_T/\sqrt{T}$, with $\zeta_T$ tending to zero. Then
$$\mathbb{E}[|\sqrt{T}T_1|]\leq c\zeta_T+c\frac{T^{p+1/2}}{\zeta_T^pk_T^{p/2}}+c\frac{k_{T}\sqrt{T}}{\mu_T\zeta_T^2}+c\frac{T}{\sqrt{k_T}}.$$
Thanks to the assumptions on $k_T$, we can find a sequence $\zeta_T$ tending to $0$ such that
$\mathbb{E}[|\sqrt{T}T_1|]$ goes to $0$.

We now turn to $T_2$. We have 
\begin{align*}
T_2&=\frac{1}{T}\sum_{j=1}^{k_{T}}\int_{( j-1)
T/k_{T}}^{jT/k_{T}}\big(\mathbb{I}_{\{ \frac{k_{T}}{T}\int_{(
j-1) T/k_{T}}^{jT/k_{T}}h( Y_{u}) du\leq t\} }-%
\mathbb{I}_{\{ h( Y_{s}) \leq t\} }\big) ds\\
&=\frac{1}{T}\sum_{j=1}^{k_{T}}\int_{( j-1)
T/k_{T}}^{jT/k_{T}}w_{j,k_{T},s,T}ds
\end{align*}
with $w_{j,k_{T},s,T}=0$ or 

\qquad - $w_{j,k_{T},s,T}=1$ if 
\begin{equation*}
\frac{k_{T}}{T}\int_{( j-1) T/k_{T}}^{jT/k_{T}}h(
Y_{u}) du\leq t <h( Y_{s})
\end{equation*}

\qquad - $w_{j,k_{T},t,T}=-1$ if%
\begin{equation*}
h( Y_{s}) \leq t <\frac{k_{T}}{T}\int_{( j-1)
T/k_{T}}^{jT/k_{T}}h(Y_{u}) du.
\end{equation*}

Now remark that $|w_{j,k_{T},s,T}|=1$ implies
$$\big|h(Y_s)-\frac{k_{T}}{T}\int_{( j-1) T/k_{T}}^{jT/k_{T}}h(
Y_{u}) du\big|\geq \big|t-\frac{k_{T}}{T}\int_{( j-1) T/k_{T}}^{jT/k_{T}}h(
Y_{u}) du\big|.$$
Using the same kind of computations as for $T_1$, for $\varepsilon_T'$ positive and small enough and $p>0$, this leads to
$$\mathbb{P}\big[|w_{j,k_{T},s,T}|=1\big]\leq c\varepsilon'_T+ c\frac{(T/k_T)^{p/2}}{(\varepsilon'_T)^p}+c(T/k_T)^{1/2}.$$
Then, in the same way as for $T_1$, we easily obtain that $\mathbb{E}[|\sqrt{T}T_2|]$ goes to $0$.

Eventually, from \eqref{CLT1}, we get that $\sqrt{T}T_3$ converges in law towards a centered Gaussian random variable with variance $\sigma^2\text{Var}[Z_{e}(t)]$.

\end{proof}

We now give the second Lemma needed to prove Proposition \ref{Propconvhemoinsun}.

\begin{lemma}
\label{Lemma2}Let%
\begin{equation*}
\alpha _{T}(t) =\sqrt{T}\big( \frac{1}{T}\int_{0}^{T}%
\mathbb{I}_{\{ h( Y_{s}) \leq t\}
}ds-h^{-1}(t) \big) .
\end{equation*}%
The sequence $\big( \alpha _{T}( t)\big) _{0\leq t
<h(1^{-})}$ is tight in $D[0,h(1^{-}))$.
\end{lemma}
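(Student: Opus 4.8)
The plan is to verify the classical moment criterion for tightness in the Skorohod space (see e.g.\ Theorem~15.6 in \cite{Bi68}): since $\alpha_T(0)=0$ (the occupation time of $\{Y_s=0\}$ vanishes a.s.\ and $h^{-1}(0)=0$), it suffices to find a constant $c$ such that for every $T\geq1$ and every $0\leq t_1\leq t\leq t_2<h(1^-)$,
$$\mathbb{E}\big[(\alpha_T(t)-\alpha_T(t_1))^2\,(\alpha_T(t_2)-\alpha_T(t))^2\big]\leq c\,(t_2-t_1)^2,$$
the associated nondecreasing function $F(t)=\sqrt{c}\,t$ being continuous, which in addition ensures that every limit point is supported on $C[0,h(1^-))$, i.e.\ gives the announced C-tightness. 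By H1, $h$ is a strictly increasing bijection of $[0,1)$ onto $[0,h(1^-))$ whose inverse $h^{-1}$ is $(1/c_1)$-Lipschitz, so $\mathbb{I}_{\{h(Y_s)\leq t\}}=\mathbb{I}_{\{Y_s\leq h^{-1}(t)\}}$. Setting $x_i=h^{-1}(t_i)$, $x=h^{-1}(t)$ (so $x-x_1\leq(t-t_1)/c_1$, $x_2-x\leq(t_2-t)/c_1$) and
$$S_T=\int_0^T\!\big(\mathbb{I}_{\{x_1<Y_s\leq x\}}-(x-x_1)\big)ds,\qquad S_T'=\int_0^T\!\big(\mathbb{I}_{\{x<Y_s\leq x_2\}}-(x_2-x)\big)ds,$$
the left-hand side above equals $T^{-2}\,\mathbb{E}[S_T^2\,(S_T')^2]\leq T^{-2}\,\mathbb{E}[S_T^4]^{1/2}\mathbb{E}[(S_T')^4]^{1/2}$ by Cauchy--Schwarz. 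Everything thus reduces to the single uniform estimate $\mathbb{E}[S_T^4]\leq c\,T^2(x-x_1)^4$ (and its analogue for $S_T'$), since this produces a bound $c\,(x-x_1)^2(x_2-x)^2\leq c'(t_2-t_1)^2$ for the quantity above.

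Next I would exploit the regenerative structure of Section~\ref{reg}. Writing $g(y)=\mathbb{I}_{\{x_1<y\leq x\}}-(x-x_1)$, with $(\nu_n)$ as in \eqref{stop} and $n_T=\max\{k\geq1:\nu_k\leq T\}$, decompose
$$S_T=B_0+\sum_{k=1}^{n_T-1}\xi_k+B',\qquad B_0=\int_0^{\nu_1}\!g(Y_s)ds,\ \ \xi_k=\int_{\nu_k}^{\nu_{k+1}}\!g(Y_s)ds,\ \ B'=\int_{\nu_{n_T}}^{T}\!g(Y_s)ds.$$
By the strong Markov property at $\nu_1$ the cycle quantities $(\xi_k,\ell_k,M_k)_{k\geq1}$, where $\ell_k=\nu_{k+1}-\nu_k$ and $M_k$ is defined just below, are i.i.d.\ and independent of $\mathcal F_{\nu_1}$; moreover $\mathbb{E}[\xi_k]=0$ for $k\geq1$, by the identity $\sigma^2\mathbb{E}[\int_0^{\tau_1}f(\{\sigma W_u\})du]=\int_0^1 f$ of Section~\ref{reg} with $f(y)=\mathbb{I}_{\{x_1<y\leq x\}}$, together with $\mathbb{E}[\ell_k]=\mathbb{E}[\tau_1]=1/\sigma^2$. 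The crucial step is a per-cycle bound: on the $k$-th cycle $Y_{\nu_k+s}=\{\sigma(W_{\nu_k+s}-W_{\nu_k})\}$ with $\sigma(W_{\nu_k+s}-W_{\nu_k})\in(-1,1)$, so the occupation-time formula gives
$$\int_{\nu_k}^{\nu_{k+1}}\!\mathbb{I}_{\{x_1<Y_u\leq x\}}\,du\ \leq\ \frac{2(x-x_1)}{\sigma}\,M_k,\qquad \ell_k\ \leq\ \frac2\sigma\,M_k,$$
where $M_k$ is the maximal local time of the shifted Brownian motion over the cycle, i.e.\ i.i.d.\ copies of $\sup_{|a|\leq1/\sigma}L_{-1/\sigma,1/\sigma}(a)$. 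Hence $|\xi_k|\leq\frac{4(x-x_1)}{\sigma}M_k$, and since $\sup_{|a|\leq1/\sigma}L_{-1/\sigma,1/\sigma}(a)$ has finite moments of all orders (standard Brownian local-time estimates, or the explicit law in \cite{BS} recalled in Section~\ref{reg}), this yields $\mathbb{E}[\xi_1^2]\leq c(x-x_1)^2$ and $\mathbb{E}[\xi_1^4]\leq c(x-x_1)^4$. The same occupation argument on $[0,\nu_1]$, using $\nu_1\leq\tau_1$ a.s., gives $|B_0|\leq c(x-x_1)\sup_{|a|\leq1/\sigma}L_{-1/\sigma,1/\sigma}(a)$ and hence $\mathbb{E}[B_0^4]\leq c(x-x_1)^4$.

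It then remains to control the central sum $\Sigma:=\sum_{k=1}^{n_T-1}\xi_k$. Here $n_T$ is a stopping time for the filtration $\mathcal G_m=\mathcal F_{\nu_1}\vee\sigma\big((\xi_j,\ell_j,M_j)_{j\leq m}\big)$, because $\{n_T\geq m\}=\{\nu_1+\ell_1+\dots+\ell_{m-1}\leq T\}\in\mathcal G_{m-1}$, and $\big(\sum_{k\leq m}\xi_k\big)_m$ is a $(\mathcal G_m)$-martingale. By elementary renewal theory, using $\mathbb{E}[\nu_1]<\infty$, $\mathbb{E}[\tau_1]=1/\sigma^2$ and $\mathbb{E}[\tau_1^2]=5/(3\sigma^4)$, one has $\mathbb{E}[n_T]\leq cT$ and $\mathbb{E}[n_T^2]\leq cT^2$ for $T\geq1$. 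The discrete Burkholder--Davis--Gundy inequality for stopped martingales then gives $\mathbb{E}[\Sigma^4]\leq c\,\mathbb{E}\big[\big(\sum_{k=1}^{n_T}\xi_k^2\big)^2\big]$; since $\xi_k^2\leq c(x-x_1)^2M_k^2$ with the $M_k^2$ i.i.d.\ of finite fourth moment, Wald's first and second moment identities yield $\mathbb{E}\big[\big(\sum_{k=1}^{n_T}\xi_k^2\big)^2\big]\leq c(x-x_1)^4\big(\mathbb{E}[n_T]+\mathbb{E}[n_T^2]\big)\leq c\,T^2(x-x_1)^4$. For the last boundary term, $|B'|\leq\int_{\nu_{n_T}}^{\nu_{n_T+1}}|g(Y_s)|ds\leq c(x-x_1)M_{n_T}$, and since $\{n_T\geq k\}\in\mathcal G_{k-1}$ is independent of $M_k$ we get $\mathbb{E}[M_{n_T}^4]\leq\sum_{k\geq1}\mathbb{E}[M_k^4]\,\mathbb{P}(n_T\geq k)=\mathbb{E}[M_1^4]\,\mathbb{E}[n_T]\leq cT$, hence $\mathbb{E}[(B')^4]\leq cT(x-x_1)^4\leq cT^2(x-x_1)^4$. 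Combining the three pieces through Minkowski's inequality gives $\mathbb{E}[S_T^4]\leq c\,T^2(x-x_1)^4$, and together with the symmetric bound for $S_T'$ and the Cauchy--Schwarz reduction of the first paragraph this establishes the moment criterion, hence the C-tightness of $(\alpha_T)$ in $D[0,h(1^-))$.

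I expect the delicate point to be the control of the central sum $\Sigma$, which runs over the \emph{random} number of cycles $n_T$, so a fixed-length moment inequality does not apply directly; the clean resolution is to recognise $n_T$ as a stopping time for the cycle filtration and to use the stopped-martingale Burkholder--Davis--Gundy inequality together with Wald's identities. A more pedestrian alternative would truncate the number of cycles at a deterministic level $\lceil aT\rceil$ with $a$ large, apply Doob's $L^4$ inequality on the good event and, on its complement, use Chernoff's bound with the explicit Laplace transform $\mathbb{E}[e^{-\gamma\tau_1}]=1/\cosh(\sqrt{2\gamma}/\sigma)$ of Section~\ref{reg} to make $\mathbb{P}(n_T>aT)$ smaller than any negative power of $T$; this works but is heavier. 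The other load-bearing ingredient is the per-cycle occupation-time estimate: the boundedness of Brownian local time is exactly what upgrades the naive bound $\mathbb{E}[\xi_1^2]=O(x-x_1)$ to the sharper $O((x-x_1)^2)$, which is what pushes the exponent in the tightness criterion comfortably past $1$.
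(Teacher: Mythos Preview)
Your proof is correct and follows essentially the same route as the paper: both exploit the regenerative decomposition of $(Y_t)$ along the stopping times $(\nu_n)$, bound the per-cycle occupation time via the local time $L_{-1/\sigma,1/\sigma}$ and its finite moments, and then control the stopped sum over a random number $n_T\asymp T$ of cycles using renewal/Wald-type estimates. The only tactical differences are that the paper works with second moments and invokes Gut's Theorem~I.5.1 directly (with an explicit recentering term $B_4$), whereas you center the integrand, pass through the three-point criterion, and establish a fourth-moment bound via stopped-martingale BDG; both choices are equally valid and lead to the same $|t_1-t_2|^2$ control.
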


\begin{proof} Recall first the following classical C-tightness criterion, see Theorem 15.6 in \cite{Bi68}: 
If for $p>0$, $p_1>1$ and all $0\leq t_{1}, t_{2}<h(1^{-})$%
\begin{equation*}
\mathbb{E}\big[|\alpha _{T}( t_{1}) -\alpha _{T}(
t _{2})|^{p}\big]\leq c\big(|t
_{1}-t _{2}|^{p_{1}}\big), 
\end{equation*}%
then $\big(\alpha _{T}(
t)\big) _{0\leq t \leq h(1)}$ is tight in $D[0,h(1^{-})).$

We now use the sequence of stopping time $\left( \nu _{n}\right) _{n\in 
\mathbb{N}}$ defined by \eqref{stop}. 
Let $$n_{T}=\inf\{
i:\nu _{i}\geq T\}.$$ From Theorem II.5.1 and Theorem II.5.2 in \cite{Gu88}, we have
\begin{equation}\label{nm1}
\E[\nu_{n_{T}}]\leq cT
\end{equation}
and 
\begin{equation}\label{nm2}
\E[(\frac{n_T}{\sigma^2}-T)^2]\leq cT.
\end{equation}

We now define
\begin{equation*}
Y_{i}( t _{1},t _{2}) =\frac{1}{\sqrt{T}}\Big(
\int_{\nu _{i-1}}^{\nu _{i}}\mathbb{I}_{\{ t _{1}<h(
Y_{t}) \leq t _{2}\} }dt-\frac{1}{\sigma^{2}}\big(h^{-1}( t
_{2}) -h^{-1}( t _{1})\big) \Big) 
\end{equation*}%
and%
\begin{equation*}
\tilde{Y}_{n_{T}+1}( t _{1},t _{2}) =\frac{1}{\sqrt{T}}
\int_T^{\nu _{n_{T}}}\mathbb{I}_{\{ t _{1}<h(
Y_{t}) \leq t _{2}\}}dt. 
\end{equation*}
Remark that for $i\geq 2$, the $Y_{i}( t _{1},t _{2})$ are centered (see Section \ref{reg}) and iid.
Moreover, using the occupation formula together with a Taylor expansion, we get
$$
\mathbb{E}\big[\big(Y_{2}( t _{1},t _{2})\big)^2\big]\leq T^{-1}\E\Big[\big(\int_{\nu
_{i-1}}^{\nu _{i}}\mathbb{I}_{\{ t _{1}<h( Y_{t})
\leq t _{2}\} }dt\big)^2\Big]
\leq cT^{-1}|t_2-t_1|^2\E[(L^*)^2],$$
with $$L^*=\underset{u\in[-1/\sigma,1/\sigma]}{\text{sup}}\big(L_{-1/\sigma,1/\sigma}(u)\big).$$
From the Ray-Knight version of Burkh\"older-Davis-Gundy inequality, see \cite{ry}, we know
that all polynomial moments of $L^*$ are finite and thus
$$\mathbb{E}\big[\big(Y_{2}( t _{1},t _{2})\big)^2\big]\leq cT^{-1}|t_2-t_1|^2.$$
We show in the same way that
$$\mathbb{E}\big[\big(Y_{1}( t _{1},t _{2})\big)^2\big]+\mathbb{E}\big[\big(\tilde Y_{n_T+1}( t _{1},t _{2})\big)^2\big]\leq cT^{-1}|t_2-t_1|^2.$$

We now use the preceding inequalities in order to show that the tightness criterion holds. We have $$\alpha _{T}(t _{1}) -\alpha
_{T}( t _{2})=B_1+B_2-B_3+B_4,$$ with
\begin{align*}
B_1&=Y_{1}( t _{1},t _{2}),~~
B_2=\sum_{i=2}^{n_{T}}Y_{i}( t _{1},t _{2}),~~
B_3=\tilde Y_{n_{T}+1}( t _{1},t _{2}),\\
B_4&=\sqrt{T}\big( \frac{n_{T}-1%
}{\sigma^2T}-1\big)\big( h^{-1}( t _{2}) -h^{-1}( t
_{1}) \big).
\end{align*}
Thus
$$\E\big[|\alpha _{T}(t _{1}) -\alpha
_{T}( t _{2})|^2\big]\leq c\sum_{i=1}^4\E[|B_i|^2].$$
 By Theorem I.5.1 in \cite{Gu88}, 
\begin{equation*}
\mathbb{E}\big[\big|\sum_{i=2}^{n_{T}}Y_{i}( t _{1},t
_{2})\big|^{2}\big]\leq c\mathbb{E}[n_{T}] 
\mathbb{E}\big[\big(Y_{2}( t _{1},t _{2})\big)
^{2}\big]. 
\end{equation*}
Moreover, using \eqref{nm1}, we obtain
\begin{equation*}
\mathbb{E}[n_{T}] 
\mathbb{E}\big[|Y_{2}( t _{1},t _{2})|
^{2}\big]\leq c|t _{1}-t
_{2}|^{2}.
\end{equation*}
Finally, we easily get from \eqref{nm2} that 
$$\E[(B_4)^2]\leq c|t _{1}-t
_{2}|^{2}.$$
and therefore the tightness criterion is satisfied.
\end{proof}

\subsection{Proofs of Theorem \ref{thhinv} and Theorem \ref{TCL}}\label{pt1t12}

In order to prove Theorem \ref{thhinv} and Theorem \ref{TCL}, we start with the following corollary of Proposition \ref{Propconvhemoinsun}:

\begin{corollary}\label{cortemp}
As $%
T\rightarrow \infty $, 
$$
\sqrt{T}\Big(\hat{h}^{-1}_{e}(\cdot) -h^{-1}(\cdot),\hat{h}_{e}(\cdot) -h(\cdot)
,\int_{0}^{1}\hat{h}_{e}(u)du-1\Big)$$ 
converges in law towards
$$
\Big(\sigma G(\cdot), -h^{\prime
}(\cdot )\sigma G\big(h(\cdot )\big) ,-\int_{0}^{h(1^{-})}\sigma G( v)
dv\Big), 
$$ in $D[0,h(1^{-}))\times D[0,1)\times\mathbb{R}$ (for the product topology).
\end{corollary}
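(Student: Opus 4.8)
The plan is to derive Corollary \ref{cortemp} from Proposition \ref{Propconvhemoinsun} by a combination of the continuous mapping theorem and an argument relating $\hat h_e$ to $\hat h_e^{-1}$ via generalized inverses. First I would record that, by construction, $\hat h_e$ is the right-continuous generalized inverse of the nondecreasing step function $\hat h_e^{-1}$, and $h$ is the inverse of $h^{-1}$ (which is $\mathcal{C}^1$ with derivative bounded above and below, by H1 applied through the identity $(h^{-1})'(t)=1/h'(h^{-1}(t))$). The third coordinate is already a continuous functional of the first, since $\int_0^1 \hat h_e(u)\,du - 1 = \int_0^1(\hat h_e(u)-h(u))\,du$ and, via the layer-cake / integration-by-parts identity $\int_0^1 \hat h_e(u)\,du = h(1^-) - \int_0^{h(1^-)}\hat h_e^{-1}(t)\,dt$ (and similarly for $h$), we get $\sqrt T\big(\int_0^1\hat h_e(u)\,du-1\big) = -\int_0^{h(1^-)}\sqrt T\big(\hat h_e^{-1}(t)-h^{-1}(t)\big)\,dt$, so this coordinate is the image of the first under the bounded linear (hence continuous on $D[0,h(1^-))$ restricted to compacts, by dominated convergence and uniform integrability coming from the tightness) map $x\mapsto -\int_0^{h(1^-)} x(v)\,dv$.

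The key step is the joint convergence of the first two coordinates. Here I would invoke a Skorohod-type inverse-mapping theorem: if $\sqrt T(\hat h_e^{-1} - h^{-1}) \Rightarrow \sigma G$ in $D[0,h(1^-))$, where the limit has continuous paths and $h^{-1}$ is continuously differentiable with $(h^{-1})'$ bounded away from $0$, then the generalized inverses satisfy $\sqrt T(\hat h_e - h) \Rightarrow -\,(h^{-1})'(h(\cdot))^{-1}\,\sigma G(h(\cdot)) = -h'(\cdot)\,\sigma G(h(\cdot))$ jointly with the original, in $D[0,1)$. The cleanest route is Skorohod representation: realize the convergence $\sqrt T(\hat h_e^{-1}-h^{-1})\to \sigma G$ almost surely in $D[0,h(1^-))$ on a common probability space, note that $J_1$-convergence of a sequence of nondecreasing functions to a continuous strictly increasing limit upgrades to uniform convergence on compacts, and then use the classical fact that uniform convergence of nondecreasing functions to a continuous increasing limit transfers (uniformly on compacts) to the generalized inverses, together with a first-order Taylor expansion of $h^{-1}$ to identify the rate-$\sqrt T$ limit as $-h'(\cdot)\sigma G(h(\cdot))$. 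The joint statement then follows because everything is a measurable functional of the single object $\sqrt T(\hat h_e^{-1}-h^{-1})$ whose law has been pinned down; assembling the three coordinates gives convergence in the product topology.

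The main obstacle I anticipate is making the inverse-map argument rigorous at the level of the $J_1$ topology rather than uniform convergence, since $\hat h_e^{-1}$ and $\hat h_e$ are genuinely discontinuous (step functions with jumps of size $1/k_T\to 0$). The standard fix is precisely that the pre-limit processes are monotone and the limit $\sigma G + \sqrt T\,h^{-1}$-scaled object has a continuous strictly increasing ``drift'' part $h^{-1}$ dominating the $O(1/\sqrt T)$ fluctuation, so $J_1$-convergence of the centered-and-scaled process is equivalent to locally uniform convergence; one must check this carefully near the right endpoint $h(1^-)$, which is why the whole statement is phrased on $D[0,h(1^-))$ with convergence meaning convergence of restrictions to $[0,a]$ for every $a<h(1^-)$. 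A secondary technical point is justifying passage of the $\sqrt T$-rate through the Taylor remainder of $h^{-1}$: since $\hat h_e^{-1}-h^{-1}=O_P(1/\sqrt T)$ uniformly on compacts and $h^{-1}\in\mathcal C^1$, the remainder is $o_P(1/\sqrt T)$ locally uniformly, which suffices. Once these are in place, the continuous mapping theorem delivers the displayed joint limit.
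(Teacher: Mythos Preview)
Your proposal is correct and follows essentially the same route as the paper: the paper's proof is a one-liner invoking Proposition~\ref{Propconvhemoinsun}, Theorem~13.7.2 in Whitt (the inverse-with-centering map on $D$, which is precisely the ``Skorohod-type inverse-mapping theorem'' you describe), and the continuous mapping theorem. Your write-up simply unpacks what that citation delivers---the Skorohod-representation argument for the inverse map and the layer-cake identity for the integral coordinate---so there is no substantive methodological difference.
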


\begin{proof}
The result follows directly from Proposition \ref{Propconvhemoinsun} together with Theorem 13.7.2 in \cite{Wh02}
and the continuous mapping theorem. 
\end{proof}

We now give the proof of Theorem \ref{thhinv}. We write
$$\sqrt{T}\big(\hat{h}^{-1}( \cdot) -h^{-1}(\cdot)
\big)=V_1+V_2,$$ with
\begin{align*}
V_1&=\sqrt{T}\big(( \hat{h}_{e}^{-1}( \cdot \hat{\mu}
_{T}/\mu _{T})-h^{-1}(\cdot\hat{\mu}_{T}/\mu _{T})\big),\\
V_2&=\sqrt{T}\big((h^{-1}( \cdot \hat{\mu}
_{T}/\mu _{T})-h^{-1}(\cdot)\big).
\end{align*}
From Proposition \ref{Propconvhemoinsun} together with the continuity of the composition map, see Theorem 13.2.2 in \cite{Wh02}, we have
\begin{equation*}
\sqrt{T}V_1 \overset{d}{
\rightarrow} \sigma G(\cdot),
\end{equation*}%
in $D[0,h(1^{-}))$. 
Now recall that
$$\int_{0}^{1}\hat{h}_{e}(u)du=\hat{\mu}_{T}/\mu _{T}.$$
Thus using Corollary \ref{cortemp} together with Theorem 13.3.3 in \cite{Wh02}, we get
\begin{equation*}
\sqrt{T}V_2 \overset{d}{
\rightarrow} -\frac{(\cdot )}{h^{\prime }\big( h^{-1}(\cdot )\big) }%
\int_{0}^{h(1^{-})}\sigma G(v) dv,
\end{equation*}%
in $D[0,h(1^{-}))$. The convergences of $\sqrt{T}V_1$ and $\sqrt{T}V_2$ taking place jointly, this ends the proof of Theorem \ref{thhinv}.

From Theorem \ref{thhinv}, Theorem \ref{TCL} follows from Theorem 13.7.2 in \cite{Wh02}.

\subsection{Proof of Corollary \ref{effestim}}
Let $\widehat{h_e(Y_t)}$ be the oracle estimate of $h(Y_t)$ defined the same way as $\widehat{h(Y_t)}$ but with
$\mu_T$ instead of $\hat\mu_T$.
Now we use that  $$\hat{h}^{-1}\big(\cdot\frac{\mu _{T}}{\hat{\mu}_{T}}\big)=\hat{h}_e^{-1}(\cdot)$$
and the fact that the cycles on which $\widehat{h_e(Y_t)}$ is defined are negligible in the estimation of 
$h^{-1}( \cdot)$ and $\mu_T$. Then, from Proposition \ref{Propconvhemoinsun}, we have
$$\Big(\sqrt{T}\big( \hat{h}^{-1}\big(\cdot\frac{\mu _{T}}{\hat{\mu}_{T}}\big)
-h^{-1}( \cdot) \big),\widehat{h_e(Y_t)}\Big)\overset{d}{
\rightarrow}\big(\sigma G(\cdot),h(Y_t)\big)$$ in $D[0,h(1^{-}))\times \mathbb{R}^+$, with $G$ independent of $Y_t$.
Using Theorem 13.2.1 in \cite{Wh02} and the fact that 
$$\widehat{h(Y_t)}=\frac{\mu _{T}}{\hat{\mu}_{T}}\widehat{h_e(Y_t)},$$
we derive
$$\sqrt{T}\Big(\hat{h}^{-1}\big(\widehat{h(Y_t)}\big)
-h^{-1}\big(\widehat{h_e(Y_t)}\big)\Big)\overset{d}{
\rightarrow}\sigma G\big(h(Y_t)\big).$$
Now recall that
$$\big|h^{-1}\big(\widehat{h_e(Y_t)}\big)-h^{-1}\big(h(Y_t)\big)\big|\leq c\big|\widehat{h_e(Y_t)}-h(Y_t)\big|.$$
and that conditional on the path of $(Y_t)$, $\widehat{h_e(Y_t)}\overset{d}{=}Zk_T/(\mu _{T}T)$, where $Z$ is a Poisson random variable with parameter
$$\mu
_{T}\int_{t-T/k_{T}}^{t}h(Y_{u})du.
$$
Therefore, using a Taylor expansion and the fact that
$$\mathbb{P}\big[\underset{u\in [t-T/k_{T},t]}{\text{sup}}|Y_u-Y_t|\geq 1\big]\leq c(T/k_{T})^{1/2},$$
we get that $$\E\Big[\big|\widehat{h_e(Y_t)}-h(Y_t)\big|\Big]$$ is smaller than
$$c\E\big[\underset{u\in [t-T/k_{T},t]}{\text{sup}}|Y_u-Y_t|\big]+c\big(k_T/(T\mu_{T})\big)^{1/2}\leq c(T/k_T)^{1/2}+c\big(k_T/(T\mu_{T})\big)^{1/2}.$$
Eventually, since $T/\sqrt{k_T}$ and $k_T\sqrt{T}/\mu_T$ tend to zero, we get
$$\sqrt{T}\E\Big[\big|\widehat{h_e(Y_t)}-h(Y_t)\big|\Big]\leq cT/\sqrt{k_T}+c(k_T/\mu_{T})^{1/2}\rightarrow 0,$$ which concludes the proof.

\section*{Acknowledgements} We are grateful to Renaud Drappier and Sebouh Takvorian from BNP Paribas for their very interesting comments. 
We also thank the referee for his helpful remarks.

\end{document}